\documentclass[12pt,reqno]{amsart}

\usepackage[T1]{fontenc}
\usepackage[american]{babel}
\usepackage{amssymb}
\usepackage{amsmath}
\usepackage{amscd}
\usepackage{graphicx}
\usepackage{xcolor}
\usepackage{booktabs}
\usepackage{mathtools}
\usepackage{dsfont}
\usepackage{float}
\usepackage{tikz}
\usepackage{etoolbox}
\usepackage{enumerate}
\usepackage{caption}
\usepackage{subcaption}
\usepackage[textheight=611pt,textwidth=456pt,centering]{geometry}
\usepackage[inline]{enumitem}

\usepackage[breaklinks,colorlinks,citecolor=blue,linkcolor=blue,
  pdftitle={Block-sparse recovery of semidefinite systems and
  generalized null space conditions}]{hyperref}

\graphicspath{{pics/}}

\numberwithin{equation}{section}
\newtheorem{theorem}{Theorem}
\newtheorem{proposition}[theorem]{Proposition}
\newtheorem{lemma}[theorem]{Lemma}
\newtheorem{corollary}[theorem]{Corollary}

\theoremstyle{definition}
\newtheorem{example}[theorem]{Example}
\newtheorem{remark1}[theorem]{Remark}
\newtheorem{assumption1}[theorem]{Assumption}
\newtheorem{openproblem1}[theorem]{Open problem}
\newtheorem{definition}[theorem]{Definition}

\newenvironment{remark}{\begin{remark1}\rm}{\end{remark1}}

\numberwithin{theorem}{section}

\newcounter{FNC}[page]
\def\newfootnote#1{{\addtocounter{FNC}{2}$^\fnsymbol{FNC}$%
     \let\thefootnote\relax\footnotetext{$^\fnsymbol{FNC}$#1}}}

\newcommand{\bs}{\backslash}

\newcommand{\R}{\mathds{R}}

\newcommand{\Z}{\mathds{Z}}

\newcommand{\NP}{\mathcal{N}\mathcal{P}}

\newcommand{\sym}{\mathcal{S}}
\newcommand{\psd}{\mathcal{S}_{+}}

\newcommand{\card}[1]{\lvert{#1}\rvert}
\newcommand{\abs}[1]{\lvert{#1}\rvert}
\newcommand{\norm}[1]{\lVert{#1}\rVert}
\newcommand{\define}{\coloneqq}
\newcommand{\enifed}{\eqqcolon}
\newcommand{\suchthat}{\,:\,}
\newcommand{\sprod}[2]{{#1} \bullet {#2}}
\newcommand{\T}{^\top}
\newcommand{\floor}[1]{\lfloor{#1}\rfloor}
\newcommand{\ones}{\mathds{1}}

\DeclareMathOperator{\conv}{conv}

\DeclareMathOperator{\supp}{supp}


\DeclareMathOperator{\tr}{tr}

\DeclareMathOperator{\rank}{rank}


\makeatletter
\renewcommand{\subsection}{\@startsection{subsection}{2}%
  {\z@}%
  {.7\linespacing\@plus\linespacing}%
  {.5\linespacing}%
  {\normalfont\scshape\centering}}
\makeatother

\makeatletter
\patchcmd{\@startsection}
  {\@afterindenttrue}
  {\@afterindentfalse}
  {}{}
\makeatother


\title[Block-sparse recovery of semidefinite systems and generalized
NSP\MakeLowercase{s}]{Block-sparse recovery of semidefinite systems and
  generalized null space conditions}

\author{Janin Heuer}
\address[Janin Heuer]{Technische Universit\"at Braunschweig, Institut f\"ur Analysis und Algebra, AG Algebra, Universit\"atsplatz 2, 38106 Braunschweig, Germany}
\email{janin.heuer@tu-braunschweig.de}

\author{Frederic Matter}
\address[Frederic Matter\footnote{Corresponding author.}]{Department of Mathematics, TU Darmstadt, Dolivostr.\
15, 64293 Darmstadt, Germany}
\email{matter@mathematik.tu-darmstadt.de}

\author{Marc E. Pfetsch}
\address[Marc E. Pfetsch]{Department of Mathematics, TU Darmstadt, Dolivostr.\
15, 64293 Darmstadt, Germany}
\email{pfetsch@mathematik.tu-darmstadt.de}

\author{Thorsten Theobald}
\address[Thorsten Theobald]{Goethe-Universit\"at, FB 12 -- Institut f\"ur
Mathematik, Postfach 11 19 32, 60054 Frankfurt am Main, Germany}
\email{theobald@math.uni-frankfurt.de}

\date{\today}

\thanks{The second and the third author acknowledge support by the
  EXPRESS II project within the DFG priority program CoSIP (DFG-SPP 1798).}

\subjclass[2010]{ 
  94A12, 
  90C22
}
\keywords{Semidefinite system, Null space condition, Block-sparse recovery,
    Compressed sensing}

\begin{document}

\begin{abstract}
This article considers the recovery of low-rank matrices via a convex 
nuclear-norm minimization problem and presents two null space 
properties (NSP) which characterize uniform recovery for the 
case of block-diagonal matrices and block-diagonal positive 
semidefinite matrices.
These null-space conditions turn out to be special cases of
a new general setup, which allows to derive the mentioned NSPs
and well-known NSPs from the literature.
We discuss the relative strength of these conditions and also present a
deterministic class of matrices that satisfies the block-diagonal
semidefinite NSP.
\end{abstract}

\maketitle

\section{Introduction}

The motivating example for this article is the recovery of special
solutions of semidefinite systems. Let $A \colon \sym^n \to \R^m$ be a linear
operator from the real symmetric matrices~$\sym^n$ to $\R^m$. The goal is to
recover an unknown matrix $X^{(0)}$ from its measurements $A(X^{(0)})$ only, that
is, we are
searching for $X \in \sym^n$ with $A(X) = A(X^{(0)})$. If $A$ is
underdetermined, then it is unlikely that $X = X^{(0)}$ is recovered. In order
to increase chances of recovery, i.e.,~$X^{(0)}$ being the unique optimal solution of the corresponding recovery
problem, additional information on $X^{(0)}$ can be
used, in particular, that it has low rank. In this case,
Fazel~\cite{fazel-2002} suggested to solve the optimization
problem
\begin{equation}
  \label{eq:minrank1}
\min\, \{\rank(X) \suchthat A(X) = A(X^{(0)}),\; X \in \sym^n\}.
\end{equation}
Since the rank is a nonconvex function,~\eqref{eq:minrank1}~is hard to
solve in practice. Instead, one usually applies a convex relaxation by
replacing $\rank(X)$ by the nuclear norm $\norm{X}_{*}$, which is defined
as $\norm{X}_* = \sum_{i=1}^n \sigma_i(X)$, where $\sigma_i(X)$
are the singular values of $X$ for $i \in [n] \define \{1,\dots,n\}$.
The corresponding convex optimization problem reads
\begin{equation}
  \label{eq:relax-nuclear}
  \min\, \{\norm{X}_{*} \suchthat A(X) = A(X^{(0)}), \; X \in \sym^n\},
\end{equation}
see Recht et al.~\cite{rfp-2010} and the references therein. 

To further increase the chances of recovery, additional information might be
used, e.g., that $X^{(0)}$ is positive semidefinite or that it has a
block-diagonal structure; the corresponding optimization problems can
easily be formulated.

In fact, the mentioned settings incorporate several important special cases
that have been discussed in the compressed sensing literature. For
instance, the classical problem of recovering of a sparse vector $x$ from
measurements $A(x)$ can be obtained as a special case, 
since every vector can be interpreted as a diagonal
matrix. Since the rank of a matrix~$X$ is given by the number of nonzero
singular values of~$X$, i.e., the $\ell_0$-``norm'' of the vector of
singular values of $X$, a low-rank matrix can be considered as a natural
generalization of a sparse vector. Moreover, the nuclear norm can be seen
as a generalization of the $\ell_1$-norm, which has been used for a convex
relaxation in this case, see Chen et al.~\cite{CheDS99}. We refer to the book by Foucart
and Rauhut~\cite{foucart-rauhut-2013} for an introduction to compressed
sensing.

An important problem is to characterize when the convex
relaxation~\eqref{eq:relax-nuclear} yields $X^{(0)}$ as a unique solution
for any $X^{(0)}$ up to a given sparsity level. Such \emph{uniform recovery} can
be characterized by a so-called null space property (NSP). In the classical
case, the corresponding NSP can be found in Gribonval and Nielsen~\cite{GriN03} and
in~\cite[Theorem~4.4]{foucart-rauhut-2013}. If the vectors have to be nonnegative,
the respective NSP appears in Khajehnejad et al.~\cite{kdxh-2011} and in Zhang~\cite{Zhang2005}. For
the case of arbitrary matrices or positive semidefinite (psd) matrices,
corresponding NSPs can be found in Kong et al.~\cite{ksx-2014}, Oymak and Hassibi~\cite{oymak-hassibi-2010},
or in~\cite[Theorem~4.40]{foucart-rauhut-2013}. A very general setting for NSPs that subsumes most
of the existing NSPs has been introduced by Juditsky et
al.~\cite{JuditskyKarzanNemirovski2014}. 

To the best of the authors' knowledge, NSPs for the block-sparse
semidefinite case have not been considered in any of the existing
literature (see Section~\ref{eq:block-sparse1} for a formal definition
of block-sparsity).
Indeed, one contribution of this article is the introduction of
the corresponding NSP. It turned out that one can (with a little work)
adapt the proofs of the existing NSPs in the classical cases. A closer
investigation led us to the conclusion that all of these conditions can,
in fact, be presented and their validity proved in a very general setting,
which we describe in this article. Our setting is similar to the one of
Juditsky et al.~\cite{JuditskyKarzanNemirovski2014}, but deviates in some key points in order
to also cover nonnegative and positive semidefinite recovery.

Besides the apparent applications in compressed sensing, positive semidefinite
systems with a block-diagonal form (as formally defined in 
Definition~\ref{de:block1}) appear in various other areas. Consider a
standard semidefinite problem (SDP)
\begin{align}\label{eq:StandardSDP}
  \min\, \{A_0 \bullet X \suchthat A_p \bullet X = b_p, \; p \in
  \{1,\dots,m\},\; X \succeq 0\},
\end{align}
with $A_0,\dots,A_m \in \sym^n$, $b \in \R^m$ and
$\sprod{U}{V} \define \sum_{i,j=1}^n U_{ij}\, V_{ij}$.
To improve solving times for
such problems, sparsity in the matrices~$A_p$ should be exploited. Indeed, it is possible
to introduce a block-diagonal form on the matrix~$X$, which corresponds to the
positions of the nonzero entries in the~$A_p$ matrices. After a minor reformulation, one
ends up with an SDP of the form~\eqref{eq:StandardSDP}, where the matrices
are in block-diagonal form. Since no sparsity-related term
is added to the objective function, the optimal solution remains unchanged.
For more information on sparsity in SDPs, see,
e.g.,~\cite{FukudaKMN2001,NakataFKKM2003,VandenbergheAndersen2015}.

Moreover, block-diagonal systems appear when considering structured infeasibility in
SDPs as a generalization of the well understood structure of infeasible
linear inequality systems, see, e.g., the book~\cite{Chi2008}. Analogously
to the linear case, an irreducible infeasible subsystem (IIS) of a
semidefinite system can be defined, that is, an infeasible subsystem such that
every proper subsystem is feasible. This can be done using block-diagonal systems,
so that an IIS is given by an inclusion-minimal set of infeasible block-diagonal
subsystems. In the linear case, it is possible to fully characterize IISs by
a theorem of Gleeson and Ryan~\cite{GleR90}. For semidefinite systems
however, it turns out that this is no longer true and subsystems with
minimal block-support, i.e., \emph{block-sparse} subsystems need to be
computed in order to find an IIS, see~\cite{kpt-2019} for more details.
\smallskip

\noindent
\textbf{Main contributions.}
1. We introduce a new generalized setting for
recovery problems, which also allows to model nonnegative and positive
semidefinite constraints. Our main result is a pair of null space conditions
for the exact characterization of the uniform recovery of sufficiently 
sparse signals from their
measurements. See Theorem~\ref{thm:GeneralSettingNSP}, which
both provides a comprehensive mathematical answer on the influence of
nonnegative and semidefinite constraints and offers a versatile tool set for
a wide spectrum of reconstruction scenarios.

2. Building upon our new NSP-framework, we establish an NSP
for uniform recovery of positive semidefinite block-diagonal matrices, see
Theorem~\ref{thm:nsp_sdp}.

3. We analyze the relative strength of the new null space conditions. To this
  end, we provide a detailed classification of the most prominent subclasses of NSPs
  within the general framework, with a particular focus on the
  block-semidefinite case. This includes an NSP for the nonnegative
  block case (which had also been open so far, see Section~\ref{sec:Blocksparse}).
  In particular, we reveal the additional power of
  a nonnegative block setting for vectors and a positive semidefinite block
  setting for matrices, respectively.
  To achieve this, we construct an infinite family of instances such that
  the nonnegative block NSP captures cases which are neither captured by the
  unrestricted block NSP nor
  by the nonnegative linear NSP, see~Theorem~\ref{th:constr-family}.
 
\smallskip

The paper is structured as follows. In 
Section~\ref{sec:GeneralSettingIntroduction}, we introduce to the
setting of the paper and establish the relevant generalized NSPs.
In Section~\ref{sec:BlockPSD}, we deal with the block-diagonal semidefinite
case. Section~\ref{sec:Interrelations} then provides the interrelations
between the various NSPs.
Section~\ref{sec:Conclusion} concludes the paper with some open questions.

\smallskip

\noindent
\textbf{Notation.}
Throughout the paper, we use the following notation.
The set of symmetric real $n \times n$ matrices is denoted by $\sym^n$
and the set of positive semidefinite real $n \times n$ matrices by $\psd^n$.
Positive
semidefiniteness of a matrix $X$ is shortly written as $X \succeq 0$. The
inner matrix product of $A$, $B \in \sym^n$ is defined as
$\sprod{A}{B} \define 
\tr\big(A\T B\big) = \sum_{i,j=1}^n A_{ij}\, B_{ij}$, where $\tr(\cdot)$ is
the trace. 

On the space $\R^n$, let $\norm{x}_q \define (\sum_{i=1}^n |x_i|^q)^{1/q}$
denote the $\ell_q$-norm for some $q \ge 1$, and define
$\norm{\cdot}_0 \define \card{\{i \in [n] \suchthat x_i \neq 0\}}$ 
as the
$\ell_0$-``norm''.

\section{General sparsity structures}
\label{sec:GeneralSettingIntroduction}

As in the framework of Juditsky et al.~\cite{JuditskyKarzanNemirovski2014}, we use a
\emph{linear sensing map} $A \colon \mathcal{X} \to \R^m$ to observe
signals $x \in \mathcal{X}$ and a \emph{linear representation map}
$B \colon \mathcal{X} \to \mathcal{E}$ for mapping a signal to an
appropriate representation, where $\mathcal{X}$ and $\mathcal{E}$ are
Euclidean spaces.

By choosing $\mathcal{X}=\mathcal{E}$ and $B$ to be the
identity, this covers the classical setting of sparse recovery, where $x$
is sparse in its ``natural'' representation. However, the framework also
covers the setting where the signal~$x$ is only sparse in a suitable
representation system, with~$B$ being an appropriate transformation; this
is called the ``analysis setting'' in the compressed sensing world; for an
overview see,
e.g.,~\cite{CandesENP2011,EladMR2007,KabanavaRauhut2015,NamDEG2013}. Examples
for transformations include the discrete Fourier transform, different
wavelet
transforms~\cite{CandesDonoho2004,Groechenig2001,Mallat2008,RonShen1997,SelesnickF2009}
or a finite difference operator in total variation
minimization~\cite{CaiXu2015,ChanShen2005,NeedellWard2013}.

We introduce a set~$\mathcal{C}$ capturing further constraints
emerging from additional information like nonnegativity, and its
image~$\mathcal{D}$ under the map~$B$. Under natural assumptions, we
formulate a null space property for the uniform recovery for the set
$\mathcal{C}$ in Section~\ref{sec:GeneralSetting}.

If $\mathcal{C} = \mathcal{X}$, then our framework reduces to the framework
of~\cite{JuditskyKarzanNemirovski2014}, and our statements become the
statements concerning noise-free recovery
in~\cite{JuditskyKarzanNemirovski2014}, see
Remark~\ref{rem:DerivationJuditskySetting} below.

\begin{remark}
  Throughout the paper, we consider real vector spaces $\mathcal{X}$ and
  $\mathcal{E}$, since this is the more natural setting when considering
  nonnegative vectors. However, at least for unrestricted (block-)vectors or
  (block-diagonal)matrices, the null space properties in the subsequent
  Section~\ref{sec:GeneralSetting} also carry over without
  changes to the situation where the spaces $\mathcal{X}$ and $\mathcal{E}$ 
  are complex spaces. 
\end{remark}

\subsection{A generalized framework for sparse recovery under side constraints}
\label{sec:GeneralSetting}

Let $\mathcal{X}$, $\mathcal{E}$ be finite-dimensional Euclidean spaces and
consider an arbitrary set $\mathcal{C} \subseteq \mathcal{X}$ with~$0 \in \mathcal{C}$. Let $A \colon \mathcal{X} \to \R^m$ be a linear
sensing map and $B \colon \mathcal{X} \to \mathcal{E}$ be a linear
representation map.  Denote by
$\mathcal{D} \define \{B(x) \suchthat x \in \mathcal{C}\} \subseteq
\mathcal{E}$ the image of $\mathcal{C}$ under $B$.  Consider a norm
$\norm{\cdot}$ on $\mathcal{E}$, a set $\mathcal{P}$ of matrices
representing linear maps on $\mathcal{E}$ and a map $\nu \colon \mathcal{P}
\to \R_+$. Each map $P \in \mathcal{P}$ is
assigned a nonnegative real weight $\nu(P)$ and a linear map
$\overline{P} \colon \mathcal{E} \to \mathcal{E}$. Note that in many
examples, such as the ones in Example~\ref{ex:DeriveNonBlockSettings} 
below, $\nu(P)$ will be integer-valued, but it is not necessary to 
assume this.

\begin{remark}
  In this section, unless otherwise stated, we denote the image $F(x)$ of $x$ under a linear
  operator $F$ as $Fx$.
\end{remark}

For some real nonnegative $s$, an element $y \in \mathcal{E}$ is called
\emph{$s$-sparse} if there exists a linear map $P \in \mathcal{P}$ with
$\nu(P) \leq s$ and $Py = y$. Accordingly, an element $x \in \mathcal{X}$
is called \emph{$s$-sparse}, if its representation $Bx \in \mathcal{E}$ is
$s$-sparse. Define the set $\mathcal{P}_s \define \{P \in \mathcal{P}
\suchthat \nu(P) \leq s\}$ of linear maps that can allow
$s$-sparse elements.

For a given right-hand side $b \in \R^m$, the generalized recovery problem
now reads
\begin{align}\label{eq:GeneralSettingRecoveryProblem}
  \min\, \{ \norm{Bx}  \suchthat  Ax = b, \; x \in \mathcal{C} \}.
\end{align}
If $\mathcal{C}$ is convex,
\eqref{eq:GeneralSettingRecoveryProblem} is a convex optimization
problem. The following examples give an intuition
by showing that the setting described in this section generalizes many
important cases previously regarded in the literature.  For a (finite) set $I$, we define
the coordinate subspace $\mathcal{E}_I \define \{y \in \mathcal{E} \suchthat y_i = 0 \; \forall \, i \notin I\}$.
Additionally, we denote by $\R^I$ the space of elements with real entries
indexed by the elements of~$I$.

\begin{example}\label{ex:DeriveNonBlockSettings} \
  \begin{enumerate}[label=(\ref*{ex:DeriveNonBlockSettings}.\arabic*),wide=0pt,leftmargin=3ex]
  \item\label{ex:LinearSettingDerivation}
    \emph{Recovery of sparse vectors by $\ell_1$-minimization}\\
    For the recovery of sparse vectors $x \in \R^n$, let
    $\mathcal{X} = \mathcal{E} = \mathcal{C} = \R^n$, $B$ be
    the identity and $\norm{\cdot} = \norm{\cdot}_1$; 
    then $\mathcal{D} = \R^n$. Let $\mathcal{P}$ be the set of orthogonal
    projectors onto all coordinate subspaces of $\R^n$, and define
    $\overline{P} \define I_n - P$, where $I_n$ denotes the identity mapping on
    $\R^n$. If $S$ is the index set of components on which~$P$ projects, then
    $PBx = Px = x_S$, where $x_S \in \R^n$ equals $x$ on $S$ and is 0 otherwise.
    Similarly, for the complement $\overline{S} \define [n] \setminus S$,
    $\overline{P}Bx = x_{\overline{S}}$.
    Define the nonnegative weight $\nu(P) \define \rank(P)$, so
    that $\nu(P)$ is the number of nonzero components of the subspace~$P$
    projects onto. The notion of general sparsity reduces to
    the classical sparsity of nonzero entries in a vector $x \in \R^n$ (if
    $Px = x$, then $\norm{x}_0 \leq \nu(P)$), and
    the recovery problem~\eqref{eq:GeneralSettingRecoveryProblem} becomes
    classical $\ell_1$-minimization.
  \item\label{ex:LinearNonnegSettingDerivation}
    \emph{Recovery of sparse nonnegative vectors by $\ell_1$-minimization}\\
    For the recovery of nonnegative vectors let $\mathcal{X}$,
    $\mathcal{E}$, $B$, $\mathcal{P}$, $\nu(P)$, $\overline{P}$,
    $\norm{\cdot}$ be defined as in the previous example, and let
    $\mathcal{C} = \R^n_+$, implying $\mathcal{D} = \R^n_+$.   As before, the notion of general
    sparsity now simplifies to the classical sparsity of nonzero entries in
    a nonnegative vector $x \in \R^n_+$, and the recovery
    problem~\eqref{eq:GeneralSettingRecoveryProblem} becomes nonnegative
    $\ell_1$-minimization with $PBx = x_S$ and $\overline{P}Bx = x_{\overline{S}}$.
  \item \label{ex:MatrixSettingDerivation}
    \emph{Recovery of low-rank matrices by nuclear norm minimization}\\
    Let
    $\mathcal{X} = \mathcal{E} = \mathcal{C} = \R^{n_1\times
      n_2}$. Let the representation map $B$ be the identity (thus, $\mathcal{D} = \R^{n_1\times n_2}$), and let the norm
    $\norm{\cdot}$ be the nuclear norm $\norm{\cdot}_*$.
    For some
    positive integer~$k$ and a set $I \subseteq [k]$, define
    the matrix $T_I^k \in \R^{k \times k}$ to be a matrix with ones on the
    diagonal at positions $(i,i)$ for $i \in I$ and zeros elsewhere. Let~$\mathcal{O}^k$
    be the set of $k \times k$ orthogonal matrices. Then define the
    set $\mathcal{P}$ of projections
    $P \colon \R^{n_1\times n_2} \to \R^{n_1 \times n_2}$ as
    \begin{align*}
      \hspace*{\leftmargin}
      \mathcal{P} \define \big\{ X \mapsto U\,T^{n_1}_I\,U\T\,X\,V\,T^{n_2}_I\,V\T \suchthat
      U \in \mathcal{O}^{n_1},\, V \in \mathcal{O}^{n_2},\, I \subseteq \{1, \dots, \min\{n_1,n_2\}\}\big\}.
    \end{align*}
    For $P \in \mathcal{P}$ defined by $U \in \mathcal{O}^{n_1}$, $V \in \mathcal{O}^{n_2}$ and index
    set $I$, define the nonnegative weight $\nu(P)$ as $\nu(P) = \card{I}$,
    and $\overline{P}$ as
    \begin{align*}
      \hspace*{\leftmargin}
      X \mapsto U\,(I_{n_1} - T^{n_1}_{I})\,U\T\,X\,V\,(I_{n_2} - T^{n_2}_I)\,V\T,
    \end{align*}
    where $I_{n_i}$ denotes the identity matrix of size $n_i$, so that
    $I_{n_i} - T^{n_i}_{I} = T^{n_i}_{[n_i]\bs I}$.\footnote{Note that
      $U\,T^{n_1}_I\,U\T\,X\,V\,T^{n_2}_I\,V\T$ and
      $U\,(I_{n_1} - T^{n_1}_{I})\,U\T\,X\,V\,(I_{n_2} - T^{n_2}_I)\,V\T$ denote matrix
      products, since $U$, $T^{n_1}_I$, $T^{n_2}_I$ $X$, and $V$ are matrices, and
      not linear maps.}
    
    The intuition behind these projections is as follows. If $U$, $V$ are
    chosen such that $X = U\Sigma V\T$ is the singular value decomposition
    of $X$, then $P$ first projects~$X$ onto $\Sigma$
    containing the singular values
    $\sigma_1(X) \geq \cdots \geq \sigma_{\min\{n_1,n_2\}}(X)$, then sets
    $\sigma_i(X) = 0$ for all $i \notin I$ via left- and
    right-multiplication of $T^{n_1}_I$ and $T^{n_2}_I$, respectively, and transforms the resulting
    diagonal matrix~$\tilde{\Sigma}$ back by $U\, \tilde{\Sigma}\, V\T$.

    A matrix $X \in \R^{n_1\times n_2}$ is rank-$s$-sparse, i.e., there exist
    at most $s$ nonzero singular values, if and only if there exists a
    projection $P \in \mathcal{P}$ with corresponding index set~$I$ with
    $PX = X$ and $\card{I} \leq s$. Thus, $\sigma_i(X) = 0$ for all
    $i \notin I$, and accordingly, $\rank(X) \leq s$. Therefore, the
    recovery problem~\eqref{eq:GeneralSettingRecoveryProblem} becomes
    low-rank matrix recovery, and sparsity translates to low-rankness.
  \item\label{ex:MatrixPSDSettingDerivation}
    \emph{Recovery of positive definite low-rank matrices by nuclear norm minimization}\\
    For the recovery of positive semidefinite matrices, let $\psd^n$ be the set of psd matrices of size $n \times n$.
    Then consider $\mathcal{X} = \mathcal{E} = \sym^n$,
    $\mathcal{C} = \sym^n_+$, and let $B$ be the identity
    map (thus, $\mathcal{D} = \sym^n_+$). The definitions of $\mathcal{P}$, $\overline{P}$, $\nu(P)$ and
    $\norm{\cdot}$ are as in the previous example. Again, the notion
    of sparsity simplifies to low-rankness. Recovery
    problem~\eqref{eq:GeneralSettingRecoveryProblem} becomes low-rank
    recovery for positive semidefinite matrices.
  \end{enumerate}
\end{example}

Examples with a nontrivial representation map
$B \colon \mathcal{X} \to \mathcal{E}$ are given by settings in which the
vectors or matrices obey a certain block-structure or block-diagonal form, 
even with overlapping
blocks. These settings are not as well studied as the settings described
above, so that they are discussed in the subsequent sections in more
detail.

In order to characterize the cases in which a sparse element $x^{(0)}$ can
be recovered from its measurements $b = Ax^{(0)}$
using the recovery problem~\eqref{eq:GeneralSettingRecoveryProblem}, we consider the following
assumptions on the sets~$\mathcal{C}$, $\mathcal{D}$, $\mathcal{P}$ and the
norm~$\norm{\cdot}$.

\begin{enumerate}[label=(A\arabic*)]
\item \label{eq:GeneralSettingAssump1} For every $P \in \mathcal{P}$ it
  holds that
  \begin{itemize}[label=$\circ$,leftmargin=3ex]
  \item $P^2 = P$, i.e., $P$ is a projector, and
  \item $Py \in \mathcal{D}$ for all $y \in \mathcal{D}$.
  \end{itemize}
  Moreover, $B \colon \mathcal{X} \to \mathcal{E}$ is injective, and for
  all $c_1$, $c_2 \in \mathcal{C}$, $c_1 + c_2 \in \mathcal{C}$ holds.
\item \label{eq:GeneralSettingAssump2} For every $P \in \mathcal{P}$ the
  corresponding linear map
  $\overline{P} \colon \mathcal{E} \to \mathcal{E}$ satisfies
  \begin{itemize}[label=$\circ$,leftmargin=3ex]
  \item $\overline{P}P = 0$, and
  \item $\overline{P}y \in \mathcal{D}$ for all $y \in \mathcal{D}$.
  \end{itemize}
\item \label{eq:GeneralSettingAssump3} For all $y \in \mathcal{E}$ and all
  $P \in \mathcal{P}$ it holds that $y = Py + \overline{P}y$.
\end{enumerate}

\begin{enumerate}[label=(A4\alph*)]
\item \label{eq:GeneralSettingAssump4a} For all $s \geq 0$,
  $P \in \mathcal{P}_s$, for all $x$, $z \in \mathcal{C}$ with $PBx = Bx$
  (i.e., $x$ being $s$-sparse) and $v\define x-z$ and all
  $v^{(1)},\, v^{(2)} \in \mathcal{C}$ with $v = v^{(1)}-v^{(2)}$ it holds
  that
  \begin{align*}
    \norm{Bx} \leq \norm{Bz} + \norm{PBv^{(1)}} -
    \norm{PBv^{(2)}} - \norm{\overline{P}Bv}.
  \end{align*}
\item \label{eq:GeneralSettingAssump4b} For all $s \geq 0$,
  $P \in \mathcal{P}_s$, for all $x$, $z \in \mathcal{C}$ with $PBx = Bx$
  (i.e., $x$ being $s$-sparse) and $v\define x-z$ there exist
  $v^{(1)},\, v^{(2)} \in \mathcal{C}$ with $v = v^{(1)}-v^{(2)}$ and
  \begin{align*}
    \norm{Bx} \leq \norm{Bz} + \norm{PBv^{(1)}} -
    \norm{PBv^{(2)}} - \norm{\overline{P}Bv}.
  \end{align*}
\end{enumerate}

Note that
Assumptions~\ref{eq:GeneralSettingAssump1}--\ref{eq:GeneralSettingAssump3}
are satisfied in the different settings in
Example~\ref{ex:DeriveNonBlockSettings}, because $\mathcal{P}$ consists of
orthogonal projections. Only Assumptions~\ref{eq:GeneralSettingAssump4a}
and~\ref{eq:GeneralSettingAssump4b} remain to be verified. A discussion of
these assumptions in the settings of
Example~\ref{ex:DeriveNonBlockSettings} will follow after the main result
of this section.

We can now define two versions of a null space property.

\begin{definition}\label{def:GeneralSettingNSPa}
  The linear sensing map $A$ satisfies the \emph{general null space
    property of type~I} of order $s$ for the set $\mathcal{C}$ if and only
  if for all $v \in (\mathcal{N}(A) \cap (\mathcal{C}+(-\mathcal{C})))$
  with $Bv \neq 0$ and all $P \in \mathcal{P}_s$ it holds that
  \begin{equation}\label{eq:defGeneralSettingNSPa}
    \begin{aligned}
      -\overline{P}Bv \in \mathcal{D} \; \implies \; \exists \; v^{(1)},\,
      v^{(2)} &\in \mathcal{C} \text{ with }
      v = v^{(1)} - v^{(2)} \text{ and } \\
      &\norm{PBv^{(1)}} - \norm{PBv^{(2)}} < \norm{\overline{P}Bv},
    \end{aligned}\tag{$\text{NSP-I}^\mathcal{C}$}
  \end{equation}
  where~$\mathcal{N}(A) \define \{v \in \mathcal{X} \suchthat Av = 0 \}$ is
  the null space of the linear sensing map~$A$.
\end{definition}

\begin{definition}\label{def:GeneralSettingNSPb}
  The linear sensing map $A$ satisfies the \emph{general null space
    property of type~II} of order $s$ for the set $\mathcal{C}$ if and only
  if for all $v \in (\mathcal{N}(A) \cap (\mathcal{C}+(-\mathcal{C})))$
  with $Bv \neq 0$ and all $P \in \mathcal{P}_s$ it holds that
  \begin{equation}\label{eq:defGeneralSettingNSPb}
    \begin{aligned}
      -\overline{P}Bv \in \mathcal{D} \; \implies \; \forall \; v^{(1)},\,
      v^{(2)} &\in \mathcal{C} \text{ with } v = v^{(1)} - v^{(2)}:\\
      &\norm{PBv^{(1)}} - \norm{PBv^{(2)}} < \norm{\overline{P}Bv}.
    \end{aligned}\tag{$\text{NSP-II}^\mathcal{C}$}
  \end{equation}
\end{definition}
\noindent

\begin{remark}
  We emphasize that~\eqref{eq:defGeneralSettingNSPa}
  and~\eqref{eq:defGeneralSettingNSPb} only differ in the quantifiers
  (namely, $\exists \; v^{(1)},\, v^{(2)}$ versus $\forall \; v^{(1)},\, v^{(2)}$) , 
  and these are closely connected to the quantifiers
  in~\ref{eq:GeneralSettingAssump4a} and~\ref{eq:GeneralSettingAssump4b}:
  If
  \begin{align}\label{eq:ConditionA4ab}
    \norm{Bx} \leq \norm{Bz} + \norm{PBv^{(1)}} - \norm{PBv^{(2)}} -
    \norm{\overline{P}Bv}
  \end{align}
  is satisfied \emph{for all} $v^{(1)},\, v^{(2)} \in \mathcal{C}$ with
  $v = v^{(1)}-v^{(2)}$, there only need to \emph{exist}
  $v^{(1)},\, v^{(2)} \in \mathcal{C}$ with $v = v^{(1)}-v^{(2)}$ such that
  \begin{align}\label{eq:ConditionNSPab}
    \norm{PBv^{(1)}} - \norm{PBv^{(2)}} - \norm{\overline{P}Bv} < 0
  \end{align}
  holds (if $-\overline{P}Bv \in \mathcal{D}$), which
  is~\eqref{eq:defGeneralSettingNSPa}. Otherwise, if there only
  \emph{exist} $v^{(1)},\, v^{(2)} \in \mathcal{C}$ with
  $v = v^{(1)}-v^{(2)}$ such that~\eqref{eq:ConditionA4ab} holds,
  then~\eqref{eq:ConditionNSPab} must be satisfied by \emph{all}
  $v^{(1)},\, v^{(2)} \in \mathcal{C}$ with $v = v^{(1)}-v^{(2)}$, which
  is~\eqref{eq:defGeneralSettingNSPb}. As we will see in the subsequent
  theorem, under Assumption~\ref{eq:GeneralSettingAssump4a},
  \eqref{eq:defGeneralSettingNSPa} characterizes uniform recovery, and if
  Assumption~\ref{eq:GeneralSettingAssump4b} is satisfied,
  then~\eqref{eq:defGeneralSettingNSPb} is needed for uniform recovery.

  The reason for formulating these two slightly different null space
  properties is that in the settings described in
  Examples~\ref{ex:DeriveNonBlockSettings}, the unrestricted cases, i.e.,
  $\mathcal{C} = \mathcal{X}$, satisfy
  Assumption~\ref{eq:GeneralSettingAssump4b} and thus
  need~\eqref{def:GeneralSettingNSPb} for uniform recovery, whereas in the
  restricted cases, Assumption~\ref{eq:GeneralSettingAssump4a} holds, so
  that~\eqref{eq:defGeneralSettingNSPa} suffices for uniform recovery.
\end{remark}

The following main result of this section states that the above defined null space
properties~\eqref{eq:defGeneralSettingNSPb}
and~\eqref{eq:defGeneralSettingNSPa} exactly characterize the uniform
recovery of a sufficiently sparse~$x \in \mathcal{C}$ from its
measurements~$b = Ax$ using~\eqref{eq:GeneralSettingRecoveryProblem},
depending on which assumptions are satisfied.

\begin{theorem}\label{thm:GeneralSettingNSP}
  Suppose that
  Assumptions~\ref{eq:GeneralSettingAssump1},~\ref{eq:GeneralSettingAssump2}
  and~\ref{eq:GeneralSettingAssump3} are satisfied. Let $A$ be a linear
  sensing map and $s \geq 1$.
  \begin{enumerate}[leftmargin=4ex]
  \item If Assumption~\ref{eq:GeneralSettingAssump4a} is satisfied, then
    the following statements are equivalent:
    \begin{itemize}
    \item[(i)] Every $s$-sparse $x^{(0)} \in \mathcal{C}$ is the unique
      solution of~\eqref{eq:GeneralSettingRecoveryProblem} with
      $b = Ax^{(0)}$.
    \item[(ii)] $A$ satisfies the general null space
      property~\eqref{eq:defGeneralSettingNSPa} of order $s$ for the set
      $\mathcal{C}$.
    \end{itemize}
    
  \item If Assumption~\ref{eq:GeneralSettingAssump4b} is satisfied, then
    the following statements are equivalent:
    \begin{itemize}
    \item[(i)] Every $s$-sparse $x^{(0)} \in \mathcal{C}$ is the unique
      solution of~\eqref{eq:GeneralSettingRecoveryProblem} with
      $b = Ax^{(0)}$.
    \item[(ii)] $A$ satisfies the general null space
      property~\eqref{eq:defGeneralSettingNSPb} of order $s$ for the set
      $\mathcal{C}$.
    \end{itemize}
  \end{enumerate}
\end{theorem}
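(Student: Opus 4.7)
The plan is to prove each equivalence by two implications. Both parts share the same structural skeleton and differ only in which quantifier pattern of~\ref{eq:GeneralSettingAssump4a}/\ref{eq:GeneralSettingAssump4b} is paired with which NSP. Throughout, I will use that~\ref{eq:GeneralSettingAssump1} and~\ref{eq:GeneralSettingAssump3} together force $P\overline{P}=\overline{P}P=0$ (apply $P$ to both sides of $y = Py + \overline{P}y$), that $\mathcal{D}=B(\mathcal{C})$ is closed under addition (since $\mathcal{C}$ is, by~\ref{eq:GeneralSettingAssump1}), and that $P$ and $\overline{P}$ map $\mathcal{D}$ into $\mathcal{D}$ by~\ref{eq:GeneralSettingAssump1} and~\ref{eq:GeneralSettingAssump2}; the injectivity of $B$ in~\ref{eq:GeneralSettingAssump1} lets me freely identify elements of $\mathcal{C}$ with their $B$-images.

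For the direction NSP~$\Rightarrow$~unique recovery, I fix an $s$-sparse $x^{(0)} \in \mathcal{C}$ with witness $P \in \mathcal{P}_s$, so that $PBx^{(0)} = Bx^{(0)}$ and hence $\overline{P}Bx^{(0)} = 0$ by~\ref{eq:GeneralSettingAssump3}. Given any feasible $z \in \mathcal{C}$ with $Az = Ax^{(0)}$ and $z \neq x^{(0)}$, I set $v \define x^{(0)} - z$. Then $v \in \mathcal{N}(A)\cap(\mathcal{C}+(-\mathcal{C}))$, $Bv \neq 0$ by injectivity of $B$, and $-\overline{P}Bv = \overline{P}Bz \in \mathcal{D}$ by~\ref{eq:GeneralSettingAssump2}. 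In part~(1), \eqref{eq:defGeneralSettingNSPa} produces a specific decomposition $v = v^{(1)} - v^{(2)}$ with $v^{(i)} \in \mathcal{C}$ and strict inequality $\norm{PBv^{(1)}} - \norm{PBv^{(2)}} < \norm{\overline{P}Bv}$; substituting this decomposition into the universal inequality of~\ref{eq:GeneralSettingAssump4a} yields $\norm{Bx^{(0)}} < \norm{Bz}$. In part~(2), \ref{eq:GeneralSettingAssump4b} first exhibits a decomposition for which the chain inequality holds, and \eqref{eq:defGeneralSettingNSPb} (whose universal quantifier covers that same decomposition) again yields $\norm{Bx^{(0)}} < \norm{Bz}$. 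In either case, $x^{(0)}$ is the unique minimizer.

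For the converse, assume the relevant NSP fails. Then there exist $v \in \mathcal{N}(A)\cap(\mathcal{C}+(-\mathcal{C}))$ with $Bv \neq 0$, $P \in \mathcal{P}_s$ with $-\overline{P}Bv \in \mathcal{D}$, and a decomposition $v = v^{(1)} - v^{(2)}$, $v^{(i)} \in \mathcal{C}$, with $\norm{PBv^{(1)}} - \norm{PBv^{(2)}} \geq \norm{\overline{P}Bv}$ (for NSP-I failure this holds for every decomposition, for NSP-II failure for at least one; in either case I fix one such). Since $P$ and $\overline{P}$ preserve $\mathcal{D}$ and $\mathcal{D}$ is closed under addition, both $PBv^{(1)}$ and $PBv^{(2)} - \overline{P}Bv$ lie in $\mathcal{D}$. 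Let $x^{(0)}, z \in \mathcal{C}$ be their respective unique $B$-preimages. Then $PBx^{(0)} = P^2 Bv^{(1)} = PBv^{(1)} = Bx^{(0)}$, so $x^{(0)}$ is $s$-sparse; using~\ref{eq:GeneralSettingAssump3} one checks $B(x^{(0)} - z) = PBv + \overline{P}Bv = Bv$, so $x^{(0)} - z = v$ by injectivity, giving $Az = Ax^{(0)}$ and $z \neq x^{(0)}$; finally, the triangle inequality combined with the NSP violation gives
\[
\norm{Bz} \le \norm{PBv^{(2)}} + \norm{\overline{P}Bv} \le \norm{PBv^{(1)}} = \norm{Bx^{(0)}},
\]
so $x^{(0)}$ is not the unique optimum.

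The main obstacle is the bookkeeping for the paired quantifiers in the matched pairs $(\text{NSP-I},~\ref{eq:GeneralSettingAssump4a})$ and $(\text{NSP-II},~\ref{eq:GeneralSettingAssump4b})$: the existence/universal patterns are swapped between the assumption and the NSP, so one must thread the \emph{same} decomposition $v = v^{(1)} - v^{(2)}$ through both the (A4) inequality and the NSP strict inequality. Once this is set up, the rest is a short calculation; the heart of the backward direction is the observation that lifting $PBv^{(1)}$ and $PBv^{(2)} - \overline{P}Bv$ back to $\mathcal{C}$ via~\ref{eq:GeneralSettingAssump1} and~\ref{eq:GeneralSettingAssump2} produces exactly the required $s$-sparse signal together with a competing feasible point of no larger $\norm{\cdot}$-value.
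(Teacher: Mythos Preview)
Your proof is correct and follows essentially the same approach as the paper: the NSP\,$\Rightarrow$\,recovery direction is identical, and for the converse you argue by contraposition where the paper argues directly, but the underlying construction---lifting $PBv^{(1)}$ to the $s$-sparse candidate $x^{(0)}$ and $PBv^{(2)}-\overline{P}Bv$ to the competitor $z$ via the injectivity of $B$ and the $\mathcal{D}$-invariance of $P,\overline{P}$---is the same in both presentations. The only cosmetic difference is that the paper, in its direct argument, observes along the way that unique recovery actually forces the NSP inequality for \emph{every} decomposition (i.e., NSP-II), which then yields NSP-I a fortiori; your contrapositive packaging hides this but is logically equivalent.
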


\begin{proof}
  For the first equivalence, let $s \geq 1$ and suppose
  Assumptions~\ref{eq:GeneralSettingAssump1}--\ref{eq:GeneralSettingAssump3}
  and~\ref{eq:GeneralSettingAssump4a} are satisfied.

  Assume that if $Ax = b$ has an $s$-sparse
  solution~$x^{(0)} \in \mathcal{C}$, then $x^{(0)}$ is the unique solution
  of~\eqref{eq:GeneralSettingRecoveryProblem}. Let $P \in \mathcal{P}_s$
  and $v \in (\mathcal{N}(A) \cap (\mathcal{C}+(-\mathcal{C})))$ with
  $Bv \neq 0$ and $-\overline{P}Bv \in \mathcal{D}$.  Since~$v \in \mathcal{C}+(-\mathcal{C})$, there exist $v^{(1)}$,
  $v^{(2)} \in \mathcal{C}$ with $v = v^{(1)} - v^{(2)}$. Define
  \begin{align*}
    w_s^{(1)} \define PBv^{(1)},\quad w_s^{(2)} \define PBv^{(2)},\quad w_{\overline{s}}
    \define \overline{P}Bv.
  \end{align*}
  Since by Assumption~\ref{eq:GeneralSettingAssump1}, $PBx \in \mathcal{D}$
  for all $x \in \mathcal{C}$ , there exist $v_s^{(1)}$, $v_s^{(2)}$,
  $v_{\overline{s}} \in \mathcal{C}$ with $Bv_s^{(1)} = w_s^{(1)}$,
  $Bv_s^{(2)} = w_s^{(2)}$ and $Bv_{\overline{s}} = w_{\overline{s}}$. Due to
  Assumption~\ref{eq:GeneralSettingAssump3}
  \begin{align*}
    Bv = PBv + \overline{P}Bv = PBv^{(1)} - PBv^{(2)} + \overline{P}Bv
    &= Bv_s^{(1)} - Bv_s^{(2)} + Bv_{\overline{s}}\\
    &= B(v_s^{(1)} - v_s^{(2)} + v_{\overline{s}}),
  \end{align*}
  which implies $v = v_s^{(1)} - v_s^{(2)} + v_{\overline{s}}$, since $B$
  is injective by Assumption~\ref{eq:GeneralSettingAssump1}. Accordingly,
  \begin{align*}
    0 = Av = A(v_s^{(1)} - v_s^{(2)} + v_{\overline{s}}) \quad
    \Leftrightarrow \quad A(v_s^{(2)} - v_{\overline{s}}) = Av_s^{(1)}.
  \end{align*}
  By Assumption~\ref{eq:GeneralSettingAssump1},
  $PBv_s^{(1)} = Pw_s^{(1)} = PPBv^{(1)} = PBv^{(1)} = w_s^{(1)} =
  Bv_s^{(1)}$, i.e., $v_s^{(1)}$ is $s$-sparse. Moreover, by
  Assumption~\ref{eq:GeneralSettingAssump1}, $Py \in \mathcal{D}$ for all
  $y \in \mathcal{D}$, $B$ is injective and $c_1 + c_2 \in \mathcal{C}$ for
  all $c_1$, $c_2 \in \mathcal{C}$. Thus, $v^{(1)} \in \mathcal{C}$ implies
  $Bv_s^{(1)} = w_s^{(1)} = PBv^{(1)} \in \mathcal{D}$. Since
  $\mathcal{D} = \{Bx \suchthat x \in \mathcal{C}\}$, there exists
  $u \in \mathcal{C}$ with $Bu = Bv_s^{(1)} \in \mathcal{D}$. This in
  turn implies $v_s^{(1)} = u \in \mathcal{C}$, since $B$ is
  injective. Analogously, $v_s^{(2)} \in \mathcal{C}$ and
  $-v_{\overline{s}} \in \mathcal{C}$, since $v^{(2)} \in \mathcal{C}$ and
  $-\overline{P}Bv \in \mathcal{D}$ per assumption. Altogether, this yields
  $v_s^{(2)} - v_{\overline{s}} \in \mathcal{C}$, since
  $c_1 + c_2 \in \mathcal{C}$ for all $c_1$, $c_2 \in \mathcal{C}$. The
  uniqueness property of $A$ for the $s$-sparse $v_s^{(1)}$ now implies
  \begin{align*}
    & \norm{Bv_s^{(1)}} < \norm{Bv_s^{(2)} -
     B v_{\overline{s}}} \leq \norm{Bv_s^{(2)}} +
      \norm{Bv_{\overline{s}}} \\
    \Leftrightarrow \quad
    & \norm{Bv_s^{(1)}} - \norm{Bv_s^{(2)}} -
      \norm{Bv_{\overline{s}}} < 0 \\
    \Leftrightarrow \quad
    & \norm{PBv^{(1)}} - \norm{PBv^{(2)}} -
      \norm{\overline{P}Bv} < 0.               
  \end{align*}

  For the reverse direction, assume $A$ satisfies the general null space
  property~\eqref{eq:defGeneralSettingNSPa} of order $s$ for the set
  $\mathcal{C}$. Let $x$, $z \in \mathcal{C}$ with $Bx \neq Bz$, $Ax = Az$
  and $x$ being $s$-sparse, i.e., there exists $P \in \mathcal{P}_s$ with
  $PBx = Bx$. Define
  $v\define x-z \in \mathcal{N}(A)\cap (\mathcal{C}+(-\mathcal{C}))$ with
  $-\overline{P}Bv = -\overline{P}Bx + \overline{P}Bz = -\overline{P}PBx +
  \overline{P}Bz = \overline{P}Bz \in \mathcal{D}$, since
  $\overline{P}P = 0$ and $\overline{P}y \in \mathcal{D}$ for all
  $y \in \mathcal{D}$ (Assumption~\ref{eq:GeneralSettingAssump2}). The
  general null space property~\eqref{eq:defGeneralSettingNSPa} implies the
  existence of $v^{(1)}$, $v^{(2)} \in \mathcal{C}$ with
  $v = v^{(1)} - v^{(2)}$ and
  \begin{align} \label{eq:ConditionGeneralNSP}
    \norm{PBv^{(1)}} - \norm{PBv^{(2)}} -
    \norm{\overline{P}Bv} < 0.
  \end{align}
  Together with Assumption~\ref{eq:GeneralSettingAssump4a} this yields
  \begin{align*}
    \norm{Bx} \leq \norm{Bz} +
    \norm{PBv^{(1)}} - \norm{PBv^{(2)}} -
    \norm{\overline{P}Bv} < \norm{Bz}.
  \end{align*}
  This shows that $x$ must be the unique solution
  of~\eqref{eq:GeneralSettingRecoveryProblem}, which completes the proof
  of the first equivalence.
  \smallskip

  For the second equivalence, note that the forward direction implying the
  general null space property~\eqref{eq:defGeneralSettingNSPb} is
  completely analogous to above. The proof for the reverse direction needs
  one small adjustment. In this case, the general null space
  property~\eqref{eq:defGeneralSettingNSPb} implies
  that~\eqref{eq:ConditionGeneralNSP} holds for all $v^{(1)}$,
  $v^{(2)} \in \mathcal{C}$ with $v = v^{(1)} - v^{(2)}$. Since
  $v \in \mathcal{C}+(-\mathcal{C})$, there exists at least one
  decomposition $v = v^{(1)} - v^{(2)}$ with $v^{(1)}$,
  $v^{(2)} \in \mathcal{C}$. Together with
  Assumption~\ref{eq:GeneralSettingAssump4b} this implies
  \begin{align*}
    \norm{Bx} \leq \norm{Bz} +
    \norm{PBv^{(1)}} - \norm{PBv^{(2)}} -
    \norm{\overline{P}Bv} < \norm{Bz},
  \end{align*}
  which concludes the proof of the second equivalence.
\end{proof}

\begin{remark}\label{rem:DerivationJuditskySetting}
  Let $\mathcal{C} = \mathcal{X}$ and $\mathcal{D} = \mathcal{E}$. Then our
  setting simplifies to the framework
  in~\cite{JuditskyKarzanNemirovski2014}. Clearly, under Assumptions~A.1--A.3 in~\cite{JuditskyKarzanNemirovski2014},
  Assumptions~\ref{eq:GeneralSettingAssump1},~\ref{eq:GeneralSettingAssump2}
  and~\ref{eq:GeneralSettingAssump4b} are
  satisfied. In this case,~\eqref{eq:defGeneralSettingNSPb} is only a
  sufficient condition, and it implies the
  sufficient condition in~\cite[Lemma~3.1]{JuditskyKarzanNemirovski2014}, namely
  \begin{align}
    \label{eq:NemirovskiNSP}
    \norm{PBv} < \norm{\overline{P}Bv}
  \end{align}
  for all $P \in \mathcal{P}_s$ and all $v \in \mathcal{N}(A)$,
  $Bv \neq 0$. If additionally Assumption~\ref{eq:GeneralSettingAssump3}
  holds, then~\eqref{eq:NemirovskiNSP} and~\eqref{eq:defGeneralSettingNSPb}
  are also necessary conditions and in fact equivalent.
\end{remark}

For all the settings derived in Example~\ref{ex:DeriveNonBlockSettings},
specific NSPs are already known in the literature. In the next example, we
demonstrate how these NSPs emerge from the generalized null space
properties~\eqref{eq:defGeneralSettingNSPa}
and~\eqref{eq:defGeneralSettingNSPb}.
As already mentioned before,
Assumptions~\ref{eq:GeneralSettingAssump1}--\ref{eq:GeneralSettingAssump3}
are satisfied in all four settings of
Example~\ref{ex:DeriveNonBlockSettings}. In case that the NSPs are satisfied, the null space characterizations provide 
algorithmically tractable algorithms to find the solution, using linear
programming, semidefinite programming or the convex optimization problem
of minimizing the nuclear norm, respectively. However, already in the
special case of recovery of sparse vectors, it is $\NP$-hard to check
whether a given sensing matrix~$A$ satisfies the classical null space
property~\cite{PfetschTillmann2014}.

~\
\begin{example}\label{ex:CompareNonBlockSettings} \
  \begin{enumerate}[label=(\ref*{ex:CompareNonBlockSettings}.\arabic*),wide=0pt,leftmargin=3ex]
  \item\label{ex:LinearSettingNSP} \emph{Recovery of sparse vectors by
      $\ell_1$-minimization,
      Example~\ref{ex:LinearSettingDerivation} continued}\\
    The following example shows that
    Assumption~\ref{eq:GeneralSettingAssump4a} is violated. Let
    $z = (2,0,0)\T$, $x = (0,-1,0)\T$ so that $v = x - z = (-2,-1,0)\T$. Let
    $P$ be the projection onto the first two coordinates. The decomposition
    $v^{(1)} = (8,9,0)\T$ and $v^{(2)} = (10,10,0)\T$ yields
    \begin{align*}
      \hspace*{\leftmargin}
      \norm{Bx}_1 = 1 > 2 + 17 - 20 - 0 = \norm{Bz}_1 + \norm{PBv^{(1)}}_1
      - \norm{PBv^{(2)}}_1 - \norm{\overline{P}Bv}_1.
    \end{align*}
    However, Assumption~\ref{eq:GeneralSettingAssump4b} is satisfied, so
    that~\eqref{eq:defGeneralSettingNSPb} characterizes uniform
    recovery. For the decomposition $v = v-0$, where $0$ denotes the
    all-zero vector, condition~\eqref{eq:defGeneralSettingNSPb} simplifies
    to the regular null space property (see, e.g.,
    \cite{foucart-rauhut-2013}):
    \begin{align}\label{eq:linearNSP}
      \hspace*{\leftmargin}
      \norm{v_S}_1 < \norm{v_{\overline{S}}}_1 \quad \forall \, v\in \mathcal{N}(A)
        \bs \{0\}, \; \forall \, S \subseteq [n],\, \card{S} \leq s, \tag{NSP}
    \end{align}
    where $S$ denotes the index set of components on which $P$ projects, and $\overline{S} \define [n] \bs S$.
    It can be shown that Condition~\eqref{eq:defGeneralSettingNSPb} and
    $\norm{v_S}_1 < \norm{v_{\overline{S}}}_1$ are equivalent.

  \item\label{ex:LinearNonnegSettingNSP} \emph{Recovery of sparse
      nonnegative vectors by $\ell_1$-minimization,
      Example~\ref{ex:LinearNonnegSettingDerivation} continued}\\
    In contrast to the previous setting,
    Assumption~\ref{eq:GeneralSettingAssump4a} is satisfied for the
    recovery of sparse nonnegative vectors. It can be shown that the
    general null space property~\eqref{eq:defGeneralSettingNSPb} of order
    $s$ for the set $\mathcal{C}$ is equivalent to the known nonnegative
    null space property~\cite{kdxh-2011,Zhang2005}:
    \begin{align}\label{eq:linearnonnegNSP}
      \hspace*{\leftmargin}
      v_{\overline{S}} \leq 0 \, \implies \, \sum_{i \in S} v_i <
      \norm{v_{\overline{S}}}_1,  \; \forall \, v \in \mathcal{N}(A)\bs
      \{0\}, \; \forall \, S \subseteq [n],\, \card{S} \leq s, \tag{$\text{NSP}_{\geq 0}$}
    \end{align}
    where again $S$ denotes the index set of components on which $P$ projects.

  \item \label{ex:MatrixSettingNSP} \emph{Recovery of low-rank matrices by nuclear norm
      minimization, Example~\ref{ex:MatrixSettingDerivation} continued}\\
    Since vectors can be interpreted as diagonal matrices, the same
    counterexample as Example~\ref{ex:LinearSettingNSP} shows that
    Assumption~\ref{eq:GeneralSettingAssump4a} is not fulfilled. However,
    it can be shown that Assumption~\ref{eq:GeneralSettingAssump4b} is
    satisfied, so that the general null space
    property~\eqref{eq:defGeneralSettingNSPb} characterizes uniform
    recovery for low-rank matrix matrices. Using the decomposition
    $V = V - 0$, condition~\eqref{eq:defGeneralSettingNSPb} simplifies
    to the well-known null space
    property~\cite{oymak-hassibi-2010,RechtXuHassibi2008},
    \cite[Theorem~4.40]{foucart-rauhut-2013}:
    \begin{align}\label{eq:matrixNSP}
      \hspace*{\leftmargin}
      \sum_{j\in S} \sigma_j(V) < \sum_{j \in \overline{S}} \sigma_j(V),
      \; \forall \, V \in \mathcal{N}(A) \bs \{0\}, \; \forall \, S
      \subseteq [\min\{n_1,n_2\}],\, \card{S} \leq s,  \tag{$\text{NSP}^*$}
    \end{align}
    where $\sigma(V)$ is the vector of singular values of $V$, and~$S$ is
    connected to a projection $P \in \mathcal{P}$ with index set~$I$ by $S = I$. For
    symmetric matrices~$X \in \sym^n$ this simplifies to
    \begin{align*}
      \hspace*{\leftmargin}
      \norm{\lambda_S(V)}_1 < \norm{\lambda_{\overline{S}}(V)}_1,
      \; \forall \, V \in \mathcal{N}(A) \bs \{0\}, \; \forall \, S
      \subseteq [n],\, \card{S} \leq s,
    \end{align*}
    where $\lambda(V)$ is the vector of eigenvalues of~$V$.
    
  \item\label{ex:MatrixPSDSettingNSPS} \emph{Recovery of positive
      semidefinite low-rank matrices by nuclear
      norm minimization, Example~\ref{ex:MatrixPSDSettingDerivation} continued}\\
    Again, in contrast to the previous setting,
    Assumption~\ref{eq:GeneralSettingAssump4a} is satisfied for recovery of
    positive semidefinite low-rank matrices. The general null space
    property~\eqref{eq:defGeneralSettingNSPa} simplifies to the following
    null space property~\cite{ksx-2014,oymak-hassibi-2010}:
    \begin{equation}\label{eq:matrixpsdNSP}
      \hspace*{\leftmargin}
    \begin{aligned}
      &\lambda_{\overline{S}}(V) \leq 0 \,
      \implies \, \sum_{j \in S} \lambda_j(V) < \norm{\lambda_{\overline{S}}(V)}_1,
      \\ &\quad \forall \, V \in (\mathcal{N}(A) \cap
      \sym^n) \bs \{0\}, \; \forall \, S \subseteq [n],\, \card{S} \leq s,
    \end{aligned}\tag{$\text{NSP}_{\succeq 0}^*$}
  \end{equation}
  where $\lambda(V)$ is the vector of eigenvalues of $V$.
  \end{enumerate}
\end{example}

\begin{remark}\label{rem:NonnegNSPvsNSP}
  The formulation of the nonnegative null space
  property~\eqref{eq:linearnonnegNSP} in
  Example~\ref{ex:LinearNonnegSettingNSP} already indicates
  that~\eqref{eq:linearnonnegNSP} is weaker than~\eqref{eq:linearNSP},
  since for the left hand side $\sum_{j \in S} v_j \leq \norm{v_s}_1$
  holds, and additionally, if the condition $v_{\overline{S}} \leq 0$ for
  all $v \in \mathcal{N}(A) \bs \{0\}$ is violated, then the inequality
  $\sum_{i \in S} v_i < \norm{v_{\overline{S}}}_1$ need not hold,
  see Example~\ref{ex:block_nsp_psd_vs_block_nsp} for an explicit
  case.
\end{remark}

\begin{remark}
From the viewpoint of an \emph{ordered} vector space, the condition
in~\eqref{eq:defGeneralSettingNSPa} can be interpreted as follows:
Let $(V, \le)$ be a finite-dimensional ordered real vector space,
i.e., a finite-dimensional real vector space $V$ with a partial
order $\le$. The \emph{positive cone}
\[
  C_V \define \{ x \in V \suchthat x \ge 0 \}
\]
is a convex cone with $C_V \cap (-C_V) = \{0\}$. If $C_V$ is
full-dimensional (which is the case, for instance for $\R^n$
with the usual ordering on vectors, or for the space of
symmetric real~$n \times n$-matrices with the usual L\"owner partial order:
$A \preceq B :\iff B-A \succeq 0$), we have
$C_V - C_V = V$ due to the following lemma.

\begin{lemma}
  Let $K \subseteq \R^n$ be a convex cone. Then $K-K = \R^n$
  if and only if $K$ is full-dimensional.
\end{lemma}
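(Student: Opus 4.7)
My plan is to reduce the equivalence to the identity $K - K = \linspan(K)$, which holds for any convex cone $K \subseteq \R^n$ containing the origin. Once this identity is established, the claim follows at once: by definition, $K$ is full-dimensional precisely when $\linspan(K) = \R^n$, and the identity then turns this into the equivalent condition $K - K = \R^n$.

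The inclusion $K - K \subseteq \linspan(K)$ is immediate, since $\linspan(K)$ is a vector subspace of $\R^n$ containing $K$, and is therefore closed under forming differences. For the reverse inclusion I would first verify that $K - K$ is itself a linear subspace. Closure under addition uses that a convex cone is closed under sums of its elements, so for $a, b, c, d \in K$ one obtains $(a - b) + (c - d) = (a + c) - (b + d) \in K - K$. Closure under scalar multiplication is handled by distinguishing the sign of the scalar: for $\lambda \geq 0$ we have $\lambda(a - b) = (\lambda a) - (\lambda b) \in K - K$ by the cone property, and for $\lambda < 0$ we rewrite $\lambda(a - b) = (-\lambda) b - (-\lambda) a \in K - K$. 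Since $0 \in K$, we also have $K = K - \{0\} \subseteq K - K$, so $K - K$ is a subspace containing $K$, and therefore contains $\linspan(K)$ as well.

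Putting both inclusions together yields $K - K = \linspan(K)$, which completes the proof. I do not foresee any serious obstacle; the only point requiring a small check is that the convention for ``convex cone'' in use here includes the origin, which is standard and is consistent with the surrounding discussion where $K$ arises as the positive cone of an ordered real vector space.
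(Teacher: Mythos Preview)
Your argument is correct. You establish the identity $K - K = \linspan(K)$ directly and derive both implications from it; the closure arguments for $K-K$ being a subspace are sound, and the point about $0 \in K$ is fine (even without that convention, $a = 2a - a$ would give $K \subseteq K - K$).

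The paper takes a different, less self-contained route: it disposes of the direction ``$K$ not full-dimensional $\Rightarrow K - K \neq \R^n$'' as obvious (which amounts to your easy inclusion $K - K \subseteq \linspan(K)$) and for the converse simply cites a lemma of Ahmadi and Hall. Your proof is more informative in that it supplies the full elementary argument and, in fact, proves the sharper identity $K - K = \linspan(K)$, from which the stated equivalence is immediate. The paper's version is shorter but relies on an external reference for the substantive direction.
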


\begin{proof}
   Clearly, if $K$ is not full-dimensional, then
   $K-K$ is not full-dimensional. For the converse direction see, e.g.,
   Ahmadi and Hall~\cite[Lemma 1]{AhmH18}.
\end{proof}

Thus, if $\mathcal{C} = C_V$ is full-dimensional in $\R^n$, the null space
conditions \eqref{eq:defGeneralSettingNSPa}
and \eqref{eq:defGeneralSettingNSPb} simplify a bit: the requirement $v \in
\mathcal{N}(A) \cap (\mathcal{C} + (-\mathcal{C}))$ can be replaced by $v \in
\mathcal{N}(A)$. Moreover, the decomposition $v = v^{(1)} - v^{(2)}$ with
$v^{(1)}$, $v^{(2)} \in \mathcal{C}$ always exists.
\end{remark}

\smallskip

\begin{remark}
  Our setting also captures the constraint that $x$ is known to be
  box-constrained and $x \in \Z^n$. For $\ell,\, u \in \Z^n$ define
  $[\ell, u]_\Z \define \{x \in \Z^n \suchthat \ell \leq x \leq u\}$ to model lower- and upper
  bound constraints of $x \in \Z^n$, where $\le$ is meant component-wise.
  Let $\mathcal{X} = \mathcal{E} = \R^n$, $\mathcal{C} = [\ell, u]_\Z$
  with~$\ell \leq 0 \leq u$. Let $B$
  be the identity map, so that $\mathcal{D} = \mathcal{C}$. Furthermore,
  let $\mathcal{P}$ be the set of orthogonal projectors onto all coordinate
  subspaces of $\R^n$, and define $\overline{P} = I_n - P$, where $I_n$
  denotes the identity mapping on $\R^n$. Define the nonnegative weight
  $\nu(P) \define \rank(P)$, so that $\nu(P)$ is the number of nonzero
  components of the subspace on which $P$ projects. This yields the
  recovery problem
  \begin{align}\label{eq:IntegralBoxConstrained}
    \min\, \{ \norm{x}_1  \suchthat  Ax = b, \; x \in [\ell, u]_\Z\},
  \end{align}
  which has been considered in, e.g.,~\cite{KeiperKLP2017,LangePST2016}.
  Due to the box-constraints $\ell \leq x \leq u$, the last part of
  Assumption~\ref{eq:GeneralSettingAssump1}, $c_1 + c_2 \in \mathcal{C}$
  for all $c_1,\, c_2 \in \mathcal{C}$, is no longer satisfied in general. However,
  an inspection of the proof of Theorem~\ref{thm:GeneralSettingNSP}
  reveals that the condition
  \begin{align*}
    PBc_1,\, \overline{P}Bc_2 \in \mathcal{D} \; \implies \; PBc_1 +
    \overline{P}Bc_2 \in \mathcal{D}
  \end{align*}
  for all $c_1$, $c_2 \in \mathcal{C} + (-\mathcal{C})$ suffices. The
  remaining parts of
  Assumptions~\ref{eq:GeneralSettingAssump1}--\ref{eq:GeneralSettingAssump3}
  as well as Assumption~\ref{eq:GeneralSettingAssump4b} can be proven to be
  satisfied, whereas the counterexample from
  Example~\ref{ex:LinearSettingNSP} shows that
  Assumption~\ref{eq:GeneralSettingAssump4a} is violated in general. Thus,
  by the second part of Theorem~\ref{thm:GeneralSettingNSP}, the null space
  property~\eqref{eq:defGeneralSettingNSPb} characterizes uniform recovery
  for box-constrained integer vectors.
\end{remark}

Since we are especially interested in recovery for (semidefinite)
block-diagonal systems, the next section discusses this setting in detail. More
specifically, this setting is derived from the general setup described
above, and the specific null space properties for characterizing uniform
recovery obtained from~\eqref{eq:defGeneralSettingNSPa} as well
as~\eqref{eq:defGeneralSettingNSPb} are stated.

\section{Semidefinite block-systems}
\label{sec:BlockPSD}

Define $\mathcal{X} = \sym^n$. As linear sensing map consider
the linear operator $A \colon \sym^n \to \R^m$ given by
\[
A(X) = (\sprod{A_1}{X}, \ldots, \sprod{A_m}{X})\T,
\]
where $A_1, \ldots, A_m \in \sym^n$, $b \in \R^m$, and $X \in \sym^n$. The
corresponding matrix equation is then $A(X) = b$.

\begin{remark}
  Note that in this section, we do not follow the notation of the previous section and denote the image of
  some linear map $F$ as $F(X)$ in order to avoid confusion with matrix products.
\end{remark}

The block-diagonal form can now be defined as follows.
\begin{definition}
  \label{de:block1}
  Let $k \geq 1$ and $B_1, \ldots, B_k \ne \varnothing$ a
  partition of the set $[n]$, i.e., $\bigcup_{i=1}^n B_i = [n]$ with pairwise disjoint blocks
  $B_i$.
   A linear operator $A(X)$ is in \emph{block-diagonal form} with
  blocks $B_1, \ldots, B_k$ if and only if $(A_i)_{s,t} = 0$ for all
  $(s,t) \notin (B_1 \times B_1) \cup \cdots \cup (B_k \times B_k)$ and all
  $ i \in [m]$.
\end{definition}
For an index set $I \subseteq [n]$ and a matrix $X \in \sym^n$, write $X_I$
for the submatrix containing rows and columns of $X$ indexed by $I$, and
write $\sym^{I}$ (and $\sym^{I}_+$) as the space of symmetric (positive
semidefinite) $\card{I} \times \card{I}$ matrices with rows and columns
indexed by the elements of $I$.

Let
$\mathcal{E} = \sym^{B_1} \times \cdots \times
\sym^{B_k}$. We write $X \in \mathcal{E}$ as
\begin{align*}
  X =
  \begin{pmatrix}
    X_{B_1} & & \\
    & \ddots & \\
    & & X_{B_k}
  \end{pmatrix} \text{ with } X_{B_i} \in \sym^{B_i} \text{ for
    all } i \in [k].
\end{align*}
The representation map $B \colon \mathcal{X} \to \mathcal{E}$ takes
$X \in \mathcal{X} = \sym^n$ and generates $(X_{B_1},\dots,X_{B_k})\T$ defined as
$X_{B_i} \define \{ (X_{rs})_{r,\,s \in B_i} \}$, $i \in [k]$; note
that entries outside of the blocks are ignored.

The sparsity-induced projections are defined as
$\mathcal{P} = \{P_I \suchthat I \subseteq [k]\}$, where
$P_I\colon \mathcal{E} \to \mathcal{E}$ is the orthogonal projection onto
the subspace
$\mathcal{E}_I \define \{X \in \mathcal{E} \suchthat X_{B_i} = 0 \; \forall
\, i \notin I\}$. The nonnegative weight of a projection
$P_I \in \mathcal{P}$ is defined as $\nu(P) = \card{I}$, and
$\overline{P} \define P_{[k]\bs I}$. Finally, let the norm
$\norm{\cdot}$ be the mixed $*,1$-norm
\begin{align*}
  \norm{X}_{*,1} \define \sum_{i=1}^k \norm{X_{B_i}}_{*},
\end{align*}
where $\norm{\cdot}_{*}$ is the nuclear norm on $\sym^{B_i}$. An element $X \in \mathcal{X}$
is \emph{$s$-block-sparse}, if and only if there exists an index set
\begin{equation}
  \label{eq:block-sparse1}
  \text{$I \subseteq [k]$ with $\card{I} \leq s$ and $P_I(B(X)) = B(X)$},
\end{equation} 
which
implies that $X_{B_i} = 0$ for all $i \notin I$. Thus, we obtain a
block-sparsity setting for matrices. An important side constraint on the
matrix~$X$ which is to be recovered is given by $X \succeq 0$. In order to
model this side constraint in the general setting from
Section~\ref{sec:GeneralSetting}, let $\mathcal{C} = \sym^n_+$ and thus
$\mathcal{D} = \sym^{B_1}_+ \times \cdots \times \sym^{B_k}_+$. In this
case, the general recovery problem~\eqref{eq:GeneralSettingRecoveryProblem}
simplifies to the following convex optimization problem.
\begin{equation}\label{eq:SDP1}
  \min\, \{ \norm{X}_{*,1}  \suchthat  A(X) = b, \; X \succeq 0 \}.
\end{equation}

Define $\norm{x}_0 \define \card{\supp(x)} = \card{ \{i \in [n] \suchthat x_i \neq 0 \}}$
to be the number of nonzero entries in a vector $x$. Then the number of
nonzero blocks in a block-diagonal matrix $X \in \sym^n$ can be written as
\[
\norm{X}_{*,0} =  \norm{ ( \norm{X _{B_1}}_{*}, \ldots, \norm{ X _{B_k} }_{*} )\T  }_0.
\]
Thus, the problem of finding solutions of $A(X) = b$ with minimal number of nonzero blocks
is
\begin{equation}\label{eq:SDP0}
  \min\, \{ \norm{X}_{*,0}  \suchthat  A(X) = b, \; X \succeq 0 \}.
\end{equation}
Then Problem~\eqref{eq:SDP1} is a convex relaxation of~\eqref{eq:SDP0}.

Now we discuss the question when it is possible to recover a block-sparse
positive semidefinite matrix $X^{(0)}$ with $\norm{X^{(0)}}_{*,0} \leq s$,
$s \leq k$, from $b = \mathcal{A}(X^{(0)})$ using the convex
relaxation~\eqref{eq:SDP1}. The next definition provides a null space property
which will be proved to characterize uniform recovery using~\eqref{eq:SDP1}
in Theorem~\ref{thm:nsp_sdp}.

\begin{definition}\label{def:nsp_sdp}
  A linear operator $A(X)$ in block-diagonal form satisfies the
  \emph{semidefinite block-matrix null space property} of order $s$ if and
  only if 
  \begin{align}\label{eq:def_nsp_psd}
    V_{B_i} \preceq 0 \; \forall \, i \in \overline{S} \quad \implies \quad
    \sum_{i \in S} \ones\T \lambda(V_{B_i}) < \sum_{i \in \overline{S}}
    \norm{V_{B_i}}_{*} \tag{$\text{NSP}_{*,1,\succeq 0}^*$}
  \end{align}
  holds for all $V \in (\mathcal{N}(A) \cap \sym^n) \bs \{ 0 \}$ and all
  $S \subseteq [k]$, $\card{S} \leq s$, where
  $ \overline{S} \define [k] \setminus S$ and $\lambda(V_{B_i})$ is the vector of eigenvalues of $V_{B_i}$.
\end{definition}

\begin{theorem}\label{thm:nsp_sdp}
  Let $A(X)$ be a linear operator in block-diagonal form and $s \geq
  1$. The following statements are equivalent:
  \begin{itemize}[leftmargin=4ex]
  \item[(i)] Every $X^{(0)} \in \sym^n_+$ with
    $ \norm{X^{(0)}}_{*,0} \leq s$ is the unique solution of~\eqref{eq:P1}
    with $b = A(X^{(0)})$.
  \item[(ii)] $A(X)$ satisfies the semidefinite block-matrix null space
    property of order $s$.
  \end{itemize}
\end{theorem}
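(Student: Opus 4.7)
The plan is to derive Theorem~\ref{thm:nsp_sdp} as the specialization of Theorem~\ref{thm:GeneralSettingNSP}(1) to the semidefinite block-diagonal setting described just above. Accordingly, I would (i) verify that Assumptions~\ref{eq:GeneralSettingAssump1}--\ref{eq:GeneralSettingAssump3} and~\ref{eq:GeneralSettingAssump4a} hold in this setting, and then (ii) show that the general null space property \eqref{eq:defGeneralSettingNSPa} reduces \emph{exactly} to the block-semidefinite NSP \eqref{eq:def_nsp_psd}. Together with Theorem~\ref{thm:GeneralSettingNSP}(1) this immediately yields the claimed equivalence.

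The structural assumptions are essentially bookkeeping. The maps $P_I$ and $\overline{P_I} \define P_{[k]\setminus I}$ are orthogonal block-coordinate projections on $\mathcal{E} = \sym^{B_1}\times\cdots\times \sym^{B_k}$, so $P_I^2 = P_I$, $\overline{P_I} P_I = 0$ and $P_I + \overline{P_I} = \mathrm{id}_\mathcal{E}$. Since $\mathcal{D} = \sym^{B_1}_+ \times \cdots \times \sym^{B_k}_+$ is a product of cones and $P_I,\overline{P_I}$ only zero out or preserve entire blocks, both are $\mathcal{D}$-invariant. Additivity of $\sym^n_+$ is standard, which completes~\ref{eq:GeneralSettingAssump1}--\ref{eq:GeneralSettingAssump3}.

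The key technical step is simultaneously verifying~\ref{eq:GeneralSettingAssump4a} and matching the two NSPs. Both rest on the elementary fact that $\norm{Y}_* = \tr(Y)$ whenever $Y \succeq 0$. Hence, for any decomposition $V = V^{(1)} - V^{(2)}$ with $V^{(1)}, V^{(2)} \in \sym^n_+$ and any $i \in S$,
\begin{align*}
  \norm{V^{(1)}_{B_i}}_* - \norm{V^{(2)}_{B_i}}_* = \tr(V^{(1)}_{B_i}) - \tr(V^{(2)}_{B_i}) = \tr(V_{B_i}) = \ones\T \lambda(V_{B_i}),
\end{align*}
which depends only on $V$, not on the chosen decomposition. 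For $X \in \sym^n_+$ with support on the block-index set $S$ and any $Z \in \sym^n_+$, setting $V = X - Z$ gives $V_{B_i} = -Z_{B_i}$ for $i \in \overline{S}$, so $\norm{\overline{P_S}BV}_{*,1} = \sum_{i \in \overline{S}} \tr(Z_{B_i})$, and the inequality in~\ref{eq:GeneralSettingAssump4a} collapses to the trivial identity $\norm{BX}_{*,1} = \sum_{i \in S} \tr(X_{B_i})$. For the NSP reduction, $-\overline{P_S}BV \in \mathcal{D}$ unpacks to $V_{B_i} \preceq 0$ for $i \in \overline{S}$, and the displayed computation above turns the strict inequality in~\eqref{eq:defGeneralSettingNSPa} into $\sum_{i \in S}\ones\T\lambda(V_{B_i}) < \sum_{i \in \overline{S}} \norm{V_{B_i}}_*$, which is precisely~\eqref{eq:def_nsp_psd}. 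The required decomposition $V = V^{(1)} - V^{(2)}$ in $\sym^n_+$ always exists via the spectral decomposition, and the condition $V \in \mathcal{N}(A) \cap (\mathcal{C} + (-\mathcal{C}))$ simplifies to $V \in \mathcal{N}(A)$ since $\sym^n_+$ is full-dimensional in $\sym^n$.

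The main subtlety I expect is the injectivity clause in~\ref{eq:GeneralSettingAssump1}: strictly speaking the representation map $B$ forgets off-block entries of $X \in \sym^n$ and is not injective. Since $A$ is in block-diagonal form, however, off-block components of $X$ lie in $\ker A$ as well and contribute nothing to either the data $A(X)$ or the objective $\norm{BX}_{*,1}$; the recovery problem \eqref{eq:SDP1} therefore legitimately lives on the block-diagonal subspace, on which $B$ is a bijection onto $\mathcal{E}$. Once this is sorted out, Theorem~\ref{thm:GeneralSettingNSP}(1) delivers the equivalence in Theorem~\ref{thm:nsp_sdp}.
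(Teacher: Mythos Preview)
Your proposal is correct and follows essentially the same route as the paper: verify \ref{eq:GeneralSettingAssump1}--\ref{eq:GeneralSettingAssump3}, establish \ref{eq:GeneralSettingAssump4a} via the identity $\norm{Y}_* = \tr(Y)$ for $Y \succeq 0$ (so that $\norm{PBV^{(1)}}_{*,1} - \norm{PBV^{(2)}}_{*,1}$ depends only on $V$), and then match \eqref{eq:defGeneralSettingNSPa} with \eqref{eq:def_nsp_psd}. Your explicit handling of the injectivity of $B$ is in fact more careful than the paper, which simply asserts that \ref{eq:GeneralSettingAssump1}--\ref{eq:GeneralSettingAssump3} are ``easy to see'' without commenting on the off-block entries.
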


\begin{proof}
  In the situation described above, using $\mathcal{C} = \sym^n_+$,
  $\mathcal{D} = \sym^{B_1}_+ \times \cdots \times \sym^{B_k}_+$ and the
  mixed~$\ell_{*,1}$-norm it is easy to see that
  Assumptions~\ref{eq:GeneralSettingAssump1}--\ref{eq:GeneralSettingAssump3}
  are satisfied. In order to see that
  Assumption~\ref{eq:GeneralSettingAssump4a} holds, let
  $P \define P_S \in \mathcal{P}_s$ be a projection, and consider $V = X - Z$
  with $Z$, $X \in \sym^n_+$ and $P(B(X)) = B(X)$. Let $V^{(1)}$,
  $V^{(2)} \in \sym^n_+$ be a decomposition $V = V^{(1)} - V^{(2)}$. This
  yields
  \begin{align*}
    &\norm{P(B(V^{(1)}))}_{*,1} - \norm{P(B(V^{(2)}))}_{*,1}\\
    & = \sum_{i=1}^n\lambda_i(P(B(V^{(1)}))) - \sum_{i=1}^n\lambda_i(P(B(V^{(2)}))) = \sum_{i=1}^n \lambda_i(P(B(V))) \\
    & = \sum_{i=1}^n \lambda_i(P(B(X))) - \sum_{i=1}^n \lambda_i(P(B(Z))) = \norm{P(B(X))}_{*,1} - \norm{P(B(Z))}_{*,1} \\
    & = \norm{B(X)}_{*,1} - \norm{P(B(Z))}_{*,1} +
          \norm{\overline{P}(B(Z))}_{*,1} - \norm{\overline{P}(B(Z))}_{*,1} \\
    & = \norm{B(X)}_{*,1} - \norm{B(Z)}_{*,1} + \norm{\overline{P}(B(Z))}_{*,1},
  \end{align*}
  and consequently,
  \begin{align*}
    \norm{B(X)}_{*,1} 
    = \norm{B(Z)}_{*,1} - \norm{\overline{P}(B(Z))}_{*,1}
      +\norm{P(B(V^{(1)}))}_{*,1} - \norm{P(B(V^{(2)}))}_{*,1},
  \end{align*}
  which shows, in conjunction with
  $\overline{P}(B(Z)) = \overline{P}(B(V) + B(X))
   = \overline{P}(B(V))$, that Assumption~\ref{eq:GeneralSettingAssump4a} is satisfied.

   It remains to show that~\eqref{eq:defGeneralSettingNSPa} is equivalent
   to~\eqref{eq:def_nsp_psd}.  Therefore, let $S \subseteq [k]$,
   $\card{S} \leq s$ and $P = P_S$ be fixed. Since
  \begin{align*}
    \sum_{i=1}^n \lambda_i(B(V)) = \norm{\lambda_i(B(V^+))}_* - \norm{\lambda_i(B(V^-))}_*,
  \end{align*}
  where $V^+$, $V^- \in \sym^n_+$ with $B(V) = B(V^+) - B(V^-)$,
  Condition~\eqref{eq:def_nsp_psd} clearly
  implies~\eqref{eq:defGeneralSettingNSPa} by choosing $V^{(1)} = V^+$ and
  $V^{(2)} = V^-$.

  For the reverse implication, let again $S \subseteq [k]$, $\card{S} \leq
  s$ and $P = P_S$ be fixed and let $V \in (\mathcal{N}(A) \cap \sym^n)$ with $B(V) \neq 0$ and
  $\overline{P}(B(V)) \preceq 0$. Due to~\eqref{eq:defGeneralSettingNSPa},
  there exist $V^{(1)}$, $V^{(2)} \succeq 0$ with $V = V^{(1)} - V^{(2)}$
  and
  $\norm{P(B(V^{(1)}))}_{*,1} - \norm{P(B(V^{(2)}))}_{*,1} -
  \norm{\overline{P}(B(V))}_{*,1} < 0$. This implies
  \begin{align*}
    0 &> \norm{P(B(V^{(1)}))}_{*,1} - \norm{P(B(V^{(2)}))}_{*,1} -
        \norm{\overline{P}(B(V))}_{*,1} \\
      &= \sum_{i=1}^n \lambda_i(P(B(V^{(1)}))) - \sum_{i=1}^n \lambda_i(P(B(V^{(2)})))
        - \sum_{i=1}^n \abs{\lambda_i(\overline{P}(B(V)))} \\
      &= \sum_{i \in S} \ones\T \lambda(V_{B_i}) - \sum_{i \in \overline{S}}
        \norm{V_{B_i}}_{*}, 
  \end{align*}
  which establishes~\eqref{eq:def_nsp_psd} and by
  Theorem~\ref{thm:GeneralSettingNSP} finishes the proof.
\end{proof}

In order to model the situation where the additional side constraint
$X \succeq 0$ is not present, let $\mathcal{C} = \mathcal{X} = \sym^n$ and
$\mathcal{D} = \mathcal{E} = \sym^{B_1} \times \cdots \times
\sym^{B_k}$, while $A$, $B$, $\mathcal{P}$, $\overline{P}$ and
the norm~$\norm{\cdot}$ are defined as above. In this case, the recovery
problems~\eqref{eq:SDP0} and~\eqref{eq:SDP1} become
\begin{eqnarray}\label{eq:P0}
  & & \min\, \{ \norm{X}_{*,0}  \suchthat  A(X) = b,\; X \in \sym^n \} \\
\label{eq:P1}
  & \text{and} & \min\, \{ \norm{X}_{*,1} \suchthat  A(X) = b,\; X \in \sym^n \},
\end{eqnarray}
respectively. Note that this setting can be obtained
by combining the block/group case and the matrix case
in~\cite{JuditskyKarzanNemirovski2014}.

\begin{definition} \label{def:nsp}
  A linear operator $A(X)$ in block-diagonal form satisfies the
  \emph{block-matrix null space property} of order $s$ if and only if 
  \begin{equation}\label{eq:def_nsp}
    \sum_{i \in S} \norm{V_{B_i}}_* <
    \sum_{i \in \overline{S}} \norm{V_{B_i}}_* \tag{$\text{NSP}_{*,1}^*$}
  \end{equation}
  holds for all $V \in (\mathcal{N}(A) \cap \sym^n) \bs \{ 0 \}$ and all
  $S \subseteq [k]$, $\card{S} \leq s$.
\end{definition}

\begin{theorem}\label{thm:nsp_matrix_block}
  Let $A(X)$ be a linear operator in block-diagonal form and $s \geq
  1$. The following statements are equivalent:
  \begin{itemize}[leftmargin=4ex]
  \item[(i)] Every $X^{(0)} \in \sym^n$ with
    $ \norm{X^{(0)}}_{*,0} \leq s$ is the unique solution
    of~\eqref{eq:P1} with $b = A(X^{(0)})$.
  \item[(ii)] $A(X)$ satisfies the block-matrix null space property of
    order $s$.
  \end{itemize}
\end{theorem}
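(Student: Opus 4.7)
The plan is to invoke Theorem~\ref{thm:GeneralSettingNSP}(2) in the unrestricted setting $\mathcal{C} = \mathcal{X} = \sym^n$, $\mathcal{D} = \mathcal{E}$, with $B$, $\mathcal{P}$, $\overline{P}$ and the mixed $*,1$-norm as defined preceding the theorem (identifying $\mathcal{X}$ with the block-diagonal symmetric matrices, which is the effective domain of~$A$). Since $\mathcal{D} = \mathcal{E}$, Assumptions~\ref{eq:GeneralSettingAssump1}--\ref{eq:GeneralSettingAssump3} are immediate: each $P_I$ is an orthogonal projection with $P_I + \overline{P}_I = \mathrm{id}$ and $\overline{P}_I P_I = 0$, and $\sym^n$ is closed under addition. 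Hence the proof reduces to (a) verifying Assumption~\ref{eq:GeneralSettingAssump4b} and (b) matching~\eqref{eq:defGeneralSettingNSPb} with~\eqref{eq:def_nsp}.

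For Assumption~\ref{eq:GeneralSettingAssump4b}, given $X$ with $P(B(X)) = B(X)$ for $P = P_S$ and $V \define X - Z$, I would use the trivial decomposition $V^{(1)} = V$, $V^{(2)} = 0 \in \sym^n$. Since $X_{B_i} = 0$ for $i \in \overline{S}$, one has $\norm{B(X)}_{*,1} = \sum_{i \in S}\norm{X_{B_i}}_{*}$ and $\norm{\overline{P}(B(V))}_{*,1} = \sum_{i \in \overline{S}} \norm{Z_{B_i}}_{*}$. Combining the block-wise reverse triangle inequality $\norm{V_{B_i}}_{*} \geq \norm{X_{B_i}}_{*} - \norm{Z_{B_i}}_{*}$ for $i \in S$ with simple rearrangement yields
\[
\norm{B(X)}_{*,1} \leq \norm{B(Z)}_{*,1} + \norm{P(B(V))}_{*,1} - \norm{\overline{P}(B(V))}_{*,1},
\]
which is the desired inequality with $V^{(2)} = 0$.

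For the equivalence of~\eqref{eq:defGeneralSettingNSPb} with~\eqref{eq:def_nsp}, the premise $-\overline{P}(B(V)) \in \mathcal{D}$ is automatic since $\mathcal{D} = \mathcal{E}$. The forward direction is obtained by specializing~\eqref{eq:defGeneralSettingNSPb} to $V^{(1)} = V$, $V^{(2)} = 0$, which directly gives $\sum_{i \in S}\norm{V_{B_i}}_{*} < \sum_{i \in \overline{S}}\norm{V_{B_i}}_{*}$. For the reverse direction, given any decomposition $V = V^{(1)} - V^{(2)}$, the ordinary triangle inequality for the mixed norm yields
\[
\norm{P(B(V^{(1)}))}_{*,1} - \norm{P(B(V^{(2)}))}_{*,1} \leq \norm{P(B(V))}_{*,1},
\]
and chaining this with~\eqref{eq:def_nsp} produces the strict inequality required by~\eqref{eq:defGeneralSettingNSPb}.

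The only mildly delicate step is the verification of Assumption~\ref{eq:GeneralSettingAssump4b}; it is an elementary but careful block-wise application of the reverse triangle inequality for the nuclear norm, exploiting that $X$ is block-supported in~$S$. Everything else reduces to bookkeeping once Theorem~\ref{thm:GeneralSettingNSP} is at hand, and the argument closely parallels the unrestricted matrix case of Example~\ref{ex:MatrixSettingNSP}, with the single-block nuclear norm replaced by its mixed $*,1$ analogue.
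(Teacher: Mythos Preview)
Your proposal is correct and follows exactly the approach the paper indicates: invoke Theorem~\ref{thm:GeneralSettingNSP}(2) with $\mathcal{C}=\mathcal{X}$, verify Assumption~\ref{eq:GeneralSettingAssump4b} via the trivial decomposition $V^{(1)}=V$, $V^{(2)}=0$ and the block-wise (reverse) triangle inequality, and then identify~\eqref{eq:defGeneralSettingNSPb} with~\eqref{eq:def_nsp}. The paper itself only sketches this (``it can be shown that Assumption~\ref{eq:GeneralSettingAssump4b} is satisfied'' and ``it is then easy to verify\ldots''), so your write-up simply supplies the details the paper omits; your parenthetical remark about identifying $\mathcal{X}$ with the block-diagonal matrices to make $B$ injective is a reasonable clarification of a point the paper leaves implicit.
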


As already stated, this result can be obtained
from~\cite{JuditskyKarzanNemirovski2014} by combining the block and the
matrix case. Alternatively, it can be derived from the second part of
Theorem~\ref{thm:GeneralSettingNSP}, since it can be shown that
Assumption~\ref{eq:GeneralSettingAssump4b} is satisfied. It is then easy to
verify that~\eqref{eq:defGeneralSettingNSPb} and~\eqref{eq:def_nsp} are
equivalent. The following example shows that we cannot apply the first part
of Theorem~\ref{thm:GeneralSettingNSP}, since
Assumption~\ref{eq:GeneralSettingAssump4a} is violated.

\begin{example}\label{ex:MatrixBlockAssump4aViolated}
  Let $k = 2$ and $n = 4$ together with the partition
  $[4] = \{1,2\} \cup \{3,4\}$, that is, blocks $B_1 = \{1,2\}$ and $B_2 = \{3,4\}$ and consider
  \begin{align*}
    Z = \begin{pmatrix*}[r]
      0 & 0 &   &   \\
      0 & 0 &   &   \\
      &   & 0 & 0 \\
      &   & 0 & 0
    \end{pmatrix*},\, X = \begin{pmatrix*}[r]
      -1 & 0 &   &   \\
      0 & 0 &   &   \\
      &   & 0 & 0 \\
      & & 0 & 0
    \end{pmatrix*}, \, V^{(1)} = \begin{pmatrix*}[r]
      0 & 0 &   &   \\
      0 & 0 &   &   \\
      &   & 0 & 0 \\
      & & 0 & 0
    \end{pmatrix*}, \, V^{(2)} = \begin{pmatrix*}[r]
      1 & 0 &   &   \\
      0 & 0 &   &   \\
      &   & 0 & 0 \\
      & & 0 & 0
    \end{pmatrix*},
  \end{align*}
  together with $V = X - Z = V^{(1)} - V^{(2)}$.  Let $P$ be the projection
  onto the first block coordinates, i.e., $P = P_{\{1\}} \in
  \mathcal{P}$. This yields
  \begin{align*}
    & \norm{B(Z)}_{*,1} + \norm{P(B(V^{(1)}))}_{*,1} - \norm{P(B(V^{(2)}))}_{*,1}
      - \norm{\overline{P}(B(V))}_{*,1} \\
    & = 0 + 0 -1 - 0 < 1 = \norm{B(X)}_{*},
  \end{align*}
  which is a contradiction to Assumption~\ref{eq:GeneralSettingAssump4a}.
\end{example}

\begin{remark}\label{rem:GeneralizationBlockMatrix}
  We could also consider $\mathcal{X}=\mathcal{C}=\R^{n_1 \times n_2}$
  and possibly overlapping blocks $B_i \neq \varnothing$ by
  $B_1 \cup \cdots \cup B_k = [n_1] \times [n_2]$ instead of a partition $B_1,\dots,B_k$ of $[n]$.
  Additionally, we could replace the inner nuclear norms by arbitrary
  norms on $\R^{B_i \times B_i}$. Replacing the inner nuclear norms
  by the chosen norms $\norm{\cdot}$, this also fits in our general setting
  described in Section~\ref{sec:GeneralSetting}, such
  that~\eqref{eq:def_nsp} characterizes uniform recovery
  using
  \begin{align*}
    \min\, \Big\{ \sum_{i=1}^k \norm{X_{B_i}}  \suchthat  A(X) = b,\; X
    \in \R^{n_1 \times n_2} \Big\}.
  \end{align*}
  Note that for block-diagonal matrices we denote by ``inner norms'' the
  norms that are applied on each block. This term is also used later in the
  setting of block-sparse vectors. 
\end{remark}

\section{Interrelations between the NSPs and classification}
\label{sec:Interrelations}

Building upon the general framework, 
we analyze and classify the relative strengths of the NSPs, in particular for
recovery of positive semidefinite block-diagonal matrices and other
prominent subclasses, as well as their interrelations. To prepare for
this, we also briefly record the NSPs for block-sparse (nonnegative)
vectors.
 
\subsection{Block-sparse (nonnegative) vectors}
\label{sec:Blocksparse}

Since every block-sparse vector $x$ can be interpreted as a block-diagonal
matrix $X$ where all blocks are also diagonal matrices, and the
entries of $x$ coincide with the eigenvalues of $X$,
Theorem~\ref{thm:nsp_sdp} also yields a characterization for the uniform
recovery of block-sparse nonnegative vectors using
$\ell_{1,1}$-minimization, which to the best of our knowledge is the first
characterization of uniform recovery in this case.

In order to model such block-structured vectors in the general setup of
Section~\ref{sec:GeneralSetting}, let~$\mathcal{X} = \R^n$. The
block-structure is now given by a partition $B_1,\dots,B_k$ of $[n]$, where
each set $B_i$ is nonempty. Let
$\mathcal{E} = \R^{B_1} \times \cdots \times \R^{B_k}$, where for a
(finite) set $I$, we denote by $\R^I$ the space of elements with entries
indexed by the elements of the set $I$. We write $y \in \mathcal{E}$ as
$y = (y[1],\dots, y[k])\T$, where $y[i] \in \R^{B_i}$ for all $i \in [k]$.
In order to model nonnegativity of $x$, let $\mathcal{C} = \R^n_+$ which
yields $\mathcal{D} = \R^{B_1}_+ \times \cdots \times \R^{B_k}_+$. The
representation map $B \colon \mathcal{X} \to \mathcal{E}$ maps
$x \in \mathcal{C}$ to its block-structured
representation~$y[i] = (x_j)_{j \in B_i}$. The sparsity-induced
projections are given by $\mathcal{P} = \{P_I \suchthat I \subseteq [k]\}$,
where $P_I\colon \mathcal{E} \to \mathcal{E}$ is the orthogonal projection
onto the subspace
$\mathcal{E}_I \define \{y \in \mathcal{E} \suchthat y[i]
= 0 \; \forall \, i \notin I\}$. The nonnegative weight of a projection
$P_I \in \mathcal{P}$ is defined as $\nu(P) = \card{I}$, and
$\overline{P} \define P_{[k]\bs I}$. Finally, let the norm $\norm{\cdot}$
be the mixed $\ell_{1,1}$-norm $\norm{x}_{1,1} \define \sum_{i=1}^k
\norm{y[i]}_{1}$, where~$y = Bx \in \mathcal{E}$ is the block-structured
representation of~$x \in \mathcal{X}$. A vector
$x \in \mathcal{X}$ is \emph{$s$-block-sparse}, if and only if there exists an
index set $I \subseteq [k]$ with $\card{I} \leq s$ and $P_IBx = Bx$, which
for $y = Bx$ implies that $y[i] = 0$ for $i \notin I$. Thus, this
represents block-sparsity, and we arrive at the setting of recovery of
block-sparse nonnegative vectors. This setting has already been considered
in, e.g.,~\cite{EldKB10,elhamifar-vidal-2012,Lin2013,sph-2009} without the
additional nonnegativity constraint, and in, e.g.,~\cite{Stojnic2013} for
nonnegative block-sparse vectors.

The general recovery problem~\eqref{eq:GeneralSettingRecoveryProblem}
yields the recovery problem
\begin{align}\label{eq:P1_BlockLinearNonneg}
  \min\, \{\norm{x}_{1,1} \suchthat Ax = b,\; x \in \R^n_+\},
\end{align}
which is a convex relaxation of the exact recovery problem
\begin{align}\label{eq:P0_BlockLinearNonneg}
  \min\, \{\norm{x}_{1,0} \suchthat Ax = b,\; x \in \R^n_+\}.
\end{align}

By choosing $A(X)$ in diagonal form, the NSP for nonnegative block-linear
systems can be obtained as an immediate corollary of Theorem~\ref{thm:nsp_sdp}.
While it is simply a rephrasing of a special case of that theorem in the language 
of block-sparse vectors, it will help to relate the result further below
to the literature of block-linear systems.

\begin{corollary}\label{cor:nsp_nonneg_block}
  Consider a block-linear system
  $ 
   Ax = \left[ A[1] \cdots A[k] \right] x = b,
  $
  where $b \in \R^m$ and $A \in \R^{m \times n}$ consists of $k$ blocks $A[i] \in \R^{m
    \times n_i}$. The following statements are equivalent:
  \begin{itemize}[leftmargin=4ex]
  \item[(i)] Every $x^{(0)} \in \R^n_+$ with $ \norm{x^{(0)}}_{1,0} \leq s$
    is the unique solution of~\eqref{eq:P1_BlockLinearNonneg} with
    $b = Ax^{(0)}$.
  \item[(ii)] $A$ satisfies the \emph{nonnegative block-linear null space
    property of order $s$}, i.e.,
    \begin{align}\label{eq:nsp_block_linear_nonneg}
      v[\overline{S}] \leq 0 \quad\implies\quad
      \sum_{i \in S} \ones\T v[i] < \sum_{i \in \overline{S}} \norm{v[i]}_1
      \tag{$\text{NSP}_{1,1,\geq 0}$}
    \end{align}
    holds for all $v \in \mathcal{N}(A) \bs \{ 0 \}$ and all
    $S \subseteq [k]$, $\card{S} \leq s$, where
    $v[\overline{S}] \define (v[i])_{i\in \overline{S}}$.
  \end{itemize}
\end{corollary}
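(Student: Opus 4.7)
The plan is to apply the first part of Theorem~\ref{thm:GeneralSettingNSP} to the block-sparse nonnegative vector framework set up in the paragraphs preceding the corollary, namely $\mathcal{X} = \R^n$, $\mathcal{C} = \R^n_+$, $\mathcal{E} = \R^{B_1} \times \cdots \times \R^{B_k}$, $\mathcal{D} = \R^{B_1}_+ \times \cdots \times \R^{B_k}_+$, block-coordinate projectors $P_I$, and the mixed $\ell_{1,1}$-norm. Assumptions~\ref{eq:GeneralSettingAssump1}--\ref{eq:GeneralSettingAssump3} are immediate in this setup: the $P_I$ are idempotent and preserve $\R^{B_i}_+$ componentwise, $B$ is an isomorphism, $\R^n_+$ is closed under addition, $\overline{P}_I P_I = P_{[k] \setminus I} P_I = 0$, and $y = P_I y + \overline{P}_I y$ holds blockwise.

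The main technical step is to verify Assumption~\ref{eq:GeneralSettingAssump4a}. I would fix $x, z \in \R^n_+$ with $P_S B x = B x$ (so $x[i] = 0$ for $i \notin S$), set $v = x - z$, and take any decomposition $v = v^{(1)} - v^{(2)}$ with $v^{(1)}, v^{(2)} \in \R^n_+$. Since each $v^{(j)}[i]$ is entrywise nonnegative, $\norm{v^{(j)}[i]}_1 = \ones\T v^{(j)}[i]$, so the difference $\norm{P_S B v^{(1)}}_{1,1} - \norm{P_S B v^{(2)}}_{1,1}$ collapses to $\sum_{i \in S} \ones\T v[i]$, independent of the chosen decomposition. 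A short direct calculation, closely mirroring the analogous step in the proof of Theorem~\ref{thm:nsp_sdp}, then shows that $\norm{B z}_{1,1} + \norm{P_S B v^{(1)}}_{1,1} - \norm{P_S B v^{(2)}}_{1,1} - \norm{\overline{P}_S B v}_{1,1}$ equals $\norm{B x}_{1,1}$ exactly, so Assumption~\ref{eq:GeneralSettingAssump4a} holds with equality.

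Theorem~\ref{thm:GeneralSettingNSP}(1) then immediately yields that uniform recovery via~\eqref{eq:P1_BlockLinearNonneg} of every $s$-block-sparse $x^{(0)} \in \R^n_+$ is equivalent to the general condition~\eqref{eq:defGeneralSettingNSPa} of order $s$ for $\mathcal{C} = \R^n_+$. It remains to translate~\eqref{eq:defGeneralSettingNSPa} into~\eqref{eq:nsp_block_linear_nonneg}. The condition $-\overline{P}_S B v \in \mathcal{D}$ is precisely $v[\overline{S}] \le 0$, the right-hand side $\norm{\overline{P}_S B v}_{1,1}$ equals $\sum_{i \in \overline{S}} \norm{v[i]}_1$, and by the decomposition-invariance observation from the previous paragraph, the existential quantifier in~\eqref{eq:defGeneralSettingNSPa} becomes the single strict inequality $\sum_{i \in S} \ones\T v[i] < \sum_{i \in \overline{S}} \norm{v[i]}_1$. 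The hard part is thus the verification of~\ref{eq:GeneralSettingAssump4a}; once the sign identity for the nonnegative $\ell_{1,1}$-norm differences is noticed, both that assumption and the NSP translation reduce to routine bookkeeping.
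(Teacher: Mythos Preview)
Your argument is correct. The verification of Assumptions~\ref{eq:GeneralSettingAssump1}--\ref{eq:GeneralSettingAssump3} and~\ref{eq:GeneralSettingAssump4a} is accurate, and in particular the key observation that for nonnegative $v^{(j)}$ one has $\norm{P_S B v^{(1)}}_{1,1} - \norm{P_S B v^{(2)}}_{1,1} = \sum_{i\in S}\ones\T v[i]$ independently of the decomposition is exactly what makes both~\ref{eq:GeneralSettingAssump4a} and the collapse of the existential quantifier in~\eqref{eq:defGeneralSettingNSPa} work.

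The paper, however, takes a shorter route: rather than re-applying Theorem~\ref{thm:GeneralSettingNSP} to the vector setting, it simply specializes Theorem~\ref{thm:nsp_sdp} to the case where the linear operator $A(X)$ is in diagonal form. Every block-structured vector $x\in\R^n$ is identified with the block-diagonal matrix $X=\diag(x)$, so that $x\in\R^n_+$ iff $X\succeq 0$, the eigenvalues of $X_{B_i}$ are the entries of $x[i]$, $\ones\T\lambda(V_{B_i})=\ones\T v[i]$, $\norm{V_{B_i}}_* = \norm{v[i]}_1$, and $V_{B_i}\preceq 0$ iff $v[i]\le 0$. Under this dictionary, \eqref{eq:def_nsp_psd} becomes \eqref{eq:nsp_block_linear_nonneg} verbatim, and the corollary is literally a restatement. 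Your approach re-does, in the simpler vector language, precisely the computation already carried out in the proof of Theorem~\ref{thm:nsp_sdp} (as you yourself note); it is more self-contained but less economical, since the paper has already absorbed that work into Theorem~\ref{thm:nsp_sdp}.
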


In order to model the recovery of block-sparse vectors which are not
necessarily nonnegative, let $\mathcal{C} = \mathcal{X} = \R^n$ and
$\mathcal{D} = \mathcal{E}$, while $B$, $\mathcal{P}$, $\overline{P}$ are
defined as above. This time, we choose the the mixed $\ell_{q,1}$-norm
  $
  \norm{y}_{q,1} \define \sum_{i=1}^k \norm{y[i]}_{q},
  $
with $q \geq 1$ on~$\R^{B_i}$. Note that without the additional constraint $x \geq 0$ it is
not necessary to use an inner $\ell_1$-norm for recovery. The exact
recovery problem using a nonconvex $\ell_0$-term is
\begin{align}\label{eq:P0_BlockLinear}
  \min\, \{\norm{x}_{q,0} \suchthat Ax = b,\; x \in \R^n\},
\end{align}
and its convex relaxation reads
\begin{align}\label{eq:P1_BlockLinear}
  \min\, \{\norm{x}_{q,1} \suchthat Ax = b,\; x \in \R^n\}.
\end{align}
Again, we now formulate a null space property, which, by the first part of
Theorem~\ref{thm:GeneralSettingNSP} can be proved to characterize uniform
recovery using~\eqref{eq:P1_BlockLinearNonneg}.

Similar to the previous section, define the \emph{block-linear null space
property of order $s$} as
\begin{align}\label{eq:block_linearNSP}
  \norm{v[S]}_{q,1} < \norm{v[\overline{S}]}_{q,1}, \tag{$\text{NSP}_{q,1}$}
\end{align}
for all $v \in \mathcal{N}(A)\bs \{0\}$ and all $S \subseteq [k]$ with
$\card{S} \leq s$, where again
$v[S] \define (v[i])_{i\in S}$. This null space
property characterizes the recovery for block-linear systems, as will be
shown in the subsequent corollary. If the inner $\ell_q$-norms are given by
the $\ell_2$-norm, this characterization is due to Stojnic et
al.~\cite{sph-2009}, who state as a remark, that
\begin{quote}
  ``it is reasonable to believe that the null-space characterization [...]
  can easily be generalized to the $\ell_p$ optimization''.
\end{quote}

\begin{corollary}\label{cor:nsp_blocklinear_q}
  Let $A = [A[1] \cdots A[k]] \in \R^{m \times n}$ be in block-linear form
  with $k$ blocks, $x = (x[1],\dots,x[k])\T \in \R^n$ and $s \geq 1$. The
  following statements are equivalent:
  \begin{itemize}[leftmargin=4ex]
  \item[(i)] Every $x^{(0)} \in \R^n$ with
    $ \norm{x^{(0)}}_{q,0} \leq s$ is the unique solution
    of~\eqref{eq:P0_BlockLinear} with $b = Ax^{(0)}$.
  \item[(ii)] $A$ satisfies the block-linear null space property of order
    $s$, i.e.,~\eqref{eq:block_linearNSP} holds for all
    $v \in \mathcal{N}(A)\bs \{0\}$ and all $S \subseteq [n]$ with
    $\card{S} \leq s$.
  \end{itemize}
\end{corollary}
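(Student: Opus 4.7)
The plan is to derive the result as an application of part~(2) of Theorem~\ref{thm:GeneralSettingNSP}, specialized to the unrestricted block-vector setting introduced in the paragraph just before the statement: $\mathcal{C}=\mathcal{X}=\R^n$, $\mathcal{D}=\mathcal{E}$, the representation map $B$ sending $x$ to its block-decomposition $(x[1],\dots,x[k])$, projections $\mathcal{P}=\{P_I\suchthat I\subseteq[k]\}$ with weight $\nu(P_I)\define\card{I}$, and the mixed $\ell_{q,1}$-norm. Two items remain: (a)~verifying that Assumptions~\ref{eq:GeneralSettingAssump1}--\ref{eq:GeneralSettingAssump3} and~\ref{eq:GeneralSettingAssump4b} hold, and (b)~showing that~\eqref{eq:defGeneralSettingNSPb} simplifies to~\eqref{eq:block_linearNSP} in this special case.

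For (a), Assumptions~\ref{eq:GeneralSettingAssump1}--\ref{eq:GeneralSettingAssump3} are immediate: each $P_I$ is the orthogonal projection onto a coordinate block-subspace of $\mathcal{E}$ (so $P_I^2=P_I$, $\overline{P_I}P_I=0$ and $P_I+\overline{P_I}$ equals the identity on $\mathcal{E}$), $B$ is essentially the identity after a relabeling of coordinates (hence injective), the image-containment requirements are vacuous because $\mathcal{D}=\mathcal{E}$, and $\mathcal{C}=\R^n$ is closed under addition. For Assumption~\ref{eq:GeneralSettingAssump4b}, I would fix $P=P_S\in\mathcal{P}_s$, $x,z\in\R^n$ with $PBx=Bx$, set $v\define x-z$, and exhibit the decomposition $v^{(1)}\define v$, $v^{(2)}\define 0$, which is valid since $\mathcal{C}=\R^n$. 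The $\ell_{q,1}$-norm block-decomposes; combined with $\overline{P}Bx=0$ (hence $\overline{P}Bz=-\overline{P}Bv$), this gives $\norm{Bz}_{q,1}=\norm{PBz}_{q,1}+\norm{\overline{P}Bv}_{q,1}$. Writing $Bx=PBz+PBv$ and applying the block-wise triangle inequality then yields the bound required by~\ref{eq:GeneralSettingAssump4b}.

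For (b), the hypothesis $-\overline{P}Bv\in\mathcal{D}$ in~\eqref{eq:defGeneralSettingNSPb} is vacuous since $\mathcal{D}=\mathcal{E}$. Plugging in the decomposition $v^{(1)}=v$, $v^{(2)}=0$ forces $\norm{PBv}_{q,1}<\norm{\overline{P}Bv}_{q,1}$, which after identifying the projection $P_S$ with its block-index set $S\subseteq[k]$ is exactly~\eqref{eq:block_linearNSP}. Conversely, the block-wise reverse triangle inequality gives $\norm{PBv^{(1)}}_{q,1}-\norm{PBv^{(2)}}_{q,1}\leq\norm{PB(v^{(1)}-v^{(2)})}_{q,1}=\norm{PBv}_{q,1}$ for \emph{every} decomposition $v=v^{(1)}-v^{(2)}$, so~\eqref{eq:block_linearNSP} implies the full~\eqref{eq:defGeneralSettingNSPb}. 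There is no substantial obstacle; the only subtlety is that the trivial decomposition $v=v-0$ already attains the supremum $\norm{PBv}_{q,1}$, which guarantees that the strict inequalities on both sides correspond exactly.
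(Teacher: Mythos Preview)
Your proof is correct and follows essentially the same route as the paper: both arguments ultimately rest on part~(2) of Theorem~\ref{thm:GeneralSettingNSP}. The paper merely declares that the corollary ``directly follows as a special case from Theorem~\ref{thm:nsp_matrix_block}'' (together with Remark~\ref{rem:GeneralizationBlockMatrix}, which allows arbitrary inner norms), whereas you apply Theorem~\ref{thm:GeneralSettingNSP} directly to the block-vector setting and spell out the verification of~\ref{eq:GeneralSettingAssump4b} and the equivalence of~\eqref{eq:defGeneralSettingNSPb} with~\eqref{eq:block_linearNSP}; the underlying argument is the same, yours is just more explicit.
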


As already stated,
Corollary~\ref{cor:nsp_blocklinear_q} directly follows as a special case
from Theorem~\ref{thm:nsp_matrix_block}. 

\begin{remark}
  Similar to Remark~\ref{rem:GeneralizationBlockMatrix},
  we could also consider $\mathcal{X}=\mathcal{C}=\R^{n}$
  and possibly overlapping blocks $B_i \neq \varnothing$ with $B_1 \cup
  \cdots \cup B_k = [n]$  instead of a
  partition $B_1,\dots,B_k$ of~$[n]$.  Additionally, we could replace the
  inner $\ell_q$-norms by arbitrary norms on $\R^{B_i}$.
  Replacing the inner $\ell_q$-norms by norms $\norm{\cdot}$  also fits in our general setting described in
  Section~\ref{sec:GeneralSetting}, such that~\eqref{eq:block_linearNSP} characterizes uniform recovery using
    $
    \min\, \{ \sum_{i=1}^k \norm{x[i]}  \suchthat  Ax = b \}.
    $
\end{remark}

\begin{table}[H]
  \caption{Null space properties for different settings and their references.}
  \label{tab:nsp_overview}
  \centering
  \begin{tabular}[h!]{@{\extracolsep{\fill}}c@{}@{}c@{}@{}c@{}}
    \toprule
    Setting
    & NSP
    & Reference \\
    \midrule
    \begin{minipage}[h]{0.4\linewidth}
      \centering
      Linear case: \\
      $\min\, \{\norm{x}_1 \suchthat Ax = b, \; x\in \R^n\}$
    \end{minipage}%
    & \begin{minipage}[h]{0.45\linewidth}
      \centering
      $\norm{v_S}_1 < \norm{v_{\overline{S}}}_1$ \\
      $\forall \, v\in \mathcal{N}(A) \bs \{0\}, \; S \subseteq [n],\,
      \card{S} \leq s$.
    \end{minipage}%
    & \begin{minipage}[h]{0.15\linewidth}
      \centering
      \cite{CohenDahmenDevore2009,DonohoHuo2001}, Ex.~\ref{ex:LinearSettingNSP}
    \end{minipage}%
      \vspace{0.5cm} \\
    \begin{minipage}[h]{0.4\linewidth}
      \centering
      Nonnegative linear case: \\
      $\min\, \{\norm{x}_1 \suchthat Ax = b,\; x\in \R^n_+\}$
    \end{minipage}%
    & \begin{minipage}[h]{0.45\linewidth}
      \centering
      $v_{\overline{S}} \leq 0 \, \implies \, \sum\limits_{i\in S} v_i <
      \norm{v_{\overline{S}}}_1$ \\
      $\forall \, v \in \mathcal{N}(A)\bs \{0\}, \; S \subseteq [n],\,
      \card{S} \leq s$.
    \end{minipage}%
    & \begin{minipage}[h]{0.15\linewidth}
      \centering
      \cite{kdxh-2011,Zhang2005}, Ex.~\ref{ex:LinearNonnegSettingNSP}
    \end{minipage}%
      \vspace{0.5cm}\\
    \begin{minipage}[h]{0.4\linewidth}
      \centering
      Block-linear case: \\
      $\min\, \{\norm{x}_{q,1} \suchthat Ax = b,\; x\in \R^n\}$
    \end{minipage}%
    & \begin{minipage}[h]{0.45\linewidth}
      \centering
      $\norm{v[S]}_{q,1} < \norm{v[\overline{S}]}_{q,1}$ \\
      $\forall \, v \in \mathcal{N}(A)\bs \{0\}, \; S \subseteq [k],\,
      \card{S} \leq s$.
    \end{minipage}%
    & \begin{minipage}[h]{0.15\linewidth}
      \centering
      \cite{sph-2009}, \\
      Cor.~\ref{cor:nsp_blocklinear_q}
    \end{minipage}%
      \vspace{0.5cm}\\
    \begin{minipage}[h]{0.4\linewidth}
      \centering
      Nonnegative block-linear case: \\
      $\min\, \{\norm{x}_{1,1} \suchthat Ax = b,\; x\in \R^n_+\}$
    \end{minipage}%
    & \begin{minipage}[h]{0.45\linewidth}
      \centering
      $v[\overline{S}] \leq 0 \implies \,\sum\limits_{i \in S} \ones\T v[i]
      <\norm{v[\overline{S}]}_{1,1}$ \\
      $\forall \, v \in \mathcal{N}(A)\bs \{0\}, \; S \subseteq [k],\,
      \card{S} \leq s$.
    \end{minipage}%
    & \begin{minipage}[h]{0.15\linewidth}
      \centering
      Cor.~\ref{cor:nsp_nonneg_block}
    \end{minipage}%
      \vspace{0.5cm} \\
    \begin{minipage}[h]{0.4\linewidth}
      \centering
      Matrix case:
      $\min\, \{\norm{X}_* \suchthat A(X) = b,\; X \in \sym^n\}$
    \end{minipage}%
    & \begin{minipage}[h]{0.45\linewidth}
      \centering
      $\norm{\lambda_S(V)}_1 < \norm{\lambda_{\overline{S}}(V)}_1$ \\ 
      $\forall \, V \in \mathcal{N}(A) \bs \{0\}, \; S \subseteq [n],\,
      \card{S} \leq s$.
    \end{minipage}%
    & \begin{minipage}[h]{0.15\linewidth}
      \centering
      \cite{oymak-hassibi-2010,RechtXuHassibi2008}, Ex.~\ref{ex:MatrixSettingNSP}
    \end{minipage}%
      \vspace{0.5cm}\\
    \begin{minipage}[h]{0.4\linewidth}
      \centering
      Semidefinite matrix case: \\
      $\min\, \{\norm{X}_* \suchthat A(X) = b,\; X \in \psd^n\}$
    \end{minipage}%
    & \begin{minipage}[h]{0.45\linewidth}
      \centering
      $\lambda_{\overline{S}}(V) \leq 0 \implies  \sum\limits_{j \in S}
      \lambda_j(V) < \norm{\lambda_{\overline{S}}(V)}_1$
      \\ $\forall \, V \in \mathcal{N}(A) \bs \{0\}, \; S \subseteq [n],\,
      \card{S} \leq s$.
    \end{minipage}%
    & \begin{minipage}[h]{0.15\linewidth}
      \centering
      \cite{ksx-2014,oymak-hassibi-2010}, Ex.~\ref{ex:MatrixPSDSettingNSPS}
    \end{minipage}%
      \vspace{0.5cm}\\
    \begin{minipage}[h]{0.4\linewidth}
      \centering
      Block-diagonal case: \\
      $\min\, \{\norm{X}_{*,1} \suchthat A(X) = b,\; X\in \sym^n\}$
    \end{minipage}%
    & \begin{minipage}[h]{0.45\linewidth}
      \centering
      $\sum\limits_{i \in S} \norm{V_{B_i}}_{*} < \sum\limits_{i \in
        \overline{S}}\norm{V_{B_i}}_{*}$ \\
      $\forall \, V \in \mathcal{N}(A) \bs \{0\},\; S \subseteq [k],\,
      \card{S} \leq s$.
    \end{minipage}%
    & \begin{minipage}[h]{0.15\linewidth}
      \centering
      Thm.~\ref{thm:nsp_matrix_block}
    \end{minipage}%
      \vspace{0.5cm}\\
    \begin{minipage}[h]{0.4\linewidth}
      \centering
      Semidefinite block-diagonal case: \\
      $\min\, \{\norm{X}_{*,1} \suchthat A(X) = b,\; X\in \psd^n\}$
    \end{minipage}%
    & \begin{minipage}[h]{0.45\linewidth}
      \centering
      $V_{B_i} \preceq 0 \; \forall \, i \in \overline{S}$ \\
      $\implies \, \sum\limits_{i \in S} \ones\T \lambda(V_{B_i}) <
      \sum\limits_{i \in \overline{S}} \norm{V_{B_i}}_{*}$ \\
      $\forall \, V \in \mathcal{N}(A) \bs \{0\}, \; S \subseteq [k],\,
      \card{S} \leq s$.
    \end{minipage}%
    & \begin{minipage}[h]{0.15\linewidth}
      \centering
      Thm.~\ref{thm:nsp_sdp}
    \end{minipage}%
    \\
    \bottomrule
  \end{tabular}
\end{table}

\subsection{Classification of the null space conditions}
\label{sec:Classification}

Table~\ref{tab:nsp_overview} shows the null space properties for many
important settings considered in existing literature. If already known,
the reference is given in the third column, and if not, the
corresponding theorem (resp.\ corollary) within this paper is
stated. Afterwards, we state
important relationships between the NSPs for the eight
settings considered in Table~\ref{tab:nsp_overview}. 

Recall from~Section~\ref{sec:Blocksparse}
that the block-linear and the nonnegative block-linear cases
are special cases of the block-diagonal and the semidefinite block-diagonal
cases. Note however, that the matrix case and the semidefinite matrix cases
are not special cases of the block-diagonal and the semidefinite block-diagonal
cases (but they still fall into the generalized NSP framework in 
Theorem~\ref{thm:GeneralSettingNSP}).

With respect to the null space properties in
Table~\ref{tab:nsp_overview}, we now compare the conditions that need
to hold in the cases with and without the additional constraints of the
vectors being nonnegative or the matrices being positive semidefinite, when
the inner norms used in the respective recovery problems are identical.

Every NSP for a setting where nonnegativity or
positive semidefiniteness is present stems
from~\eqref{eq:defGeneralSettingNSPa} and the NSPs in the other settings
can be derived from~\eqref{eq:defGeneralSettingNSPb}, see
Section~\ref{sec:GeneralSettingIntroduction}. By definition, a linear sensing map which satisfies~\eqref{eq:defGeneralSettingNSPb}
for $\mathcal{C} = \mathcal{C}_1$ also
satisfies~\eqref{eq:defGeneralSettingNSPa} for every
$\mathcal{C} = \mathcal{C}_2 \subseteq \mathcal{C}_1$, but the converse
needs of course not be true. Thus, in the presence of nonnegativity or
positive semidefiniteness, the conditions needed for characterizing uniform
recovery are not stronger than those needed without this prior knowledge, since
$\mathcal{C}_2 \define \R^n_+ \subseteq \R^n \enifed \mathcal{C}_1$
and~$\mathcal{C}_2 \define \sym^n_+ \subseteq \sym^n \enifed
\mathcal{C}_1$. The following example shows that exploiting positive
semidefiniteness indeed yields a weaker condition for uniform recovery,
when using the nuclear norm as inner norm in both cases.

\begin{example}\label{ex:block_nsp_psd_vs_block_nsp}
  Let $A_1, \ldots, A_4$ be the block-diagonal matrices
  \begin{align*}
    & A_1 = \left( \begin{array}{r@{\,}r@{\,}r@{\;}r}
      0 &   &   &   \\
      &-1 &   &   \\
      &   &-1 & 0 \\
      &   & 0 & 2
    \end{array} \right) ,\,
    A_2 = \left( \begin{array}{r@{\,}r@{\,}r@{\;}r}
      1 &   &   &   \\
      &-1 &   &   \\
      &   &-1 & 0 \\
      &   & 0 &-1
    \end{array} \right), \,
    A_3 = \left( \begin{array}{r@{\,}r@{\,}r@{\;\,}r}
      0 &   &   &   \\
      &-1 &   &   \\
      &   & 1 & 0 \\
      &   & 0 & 0
    \end{array} \right), \,
    A_4 = \left( \begin{array}{r@{\,}r@{\,}r@{\;\,}r}
      0 &   &   &   \\
      & 0 &   &   \\
      &   & 0 & 1 \\
      &   & 1 & 0
    \end{array} \right)
  \end{align*}
  with blocks $B_1 = \{1\}$, $B_2 = \{2\}$ and $B_3 = \{3,4\}$,
  and let $b = (-1,0,0,0)\T$. Consider
  \begin{align}
    \min\, \{ \norm{X}_{*,0} \suchthat A(X) = b,\; X \succeq 0\},
  \end{align}
  where
  $A(X) =
  (\sprod{A_1}{X},\sprod{A_2}{X},\sprod{A_3}{X},\sprod{A_4}{X})\T$, cf.~\eqref{eq:P0}.
  In this case, the null space $\mathcal{N}(A) = \{V \, : \, \sprod{A_i}{V} = 0 \text{ for }i \in [4]\}$ consists exactly of the matrices of the form
   \[
     V = \begin{pmatrix}
              3\alpha & & & \\
              & \alpha  & & \\
              & & \alpha & 0 \\
              & & 0 & \alpha
            \end{pmatrix}, \; \alpha \in \R.
  \]
  Since only nonzero matrices in the null space of $A$ are of interest
  for the NSP, $\alpha$ cannot attain the value $0$.  The
  eigenvalues of $V$ are given by
  $\lambda = (3\alpha,\alpha,\alpha, \alpha)\T$. For the semidefinite
  block-matrix null space property of order $s=1$ to hold, the
  following implications stated in Definition~\ref{def:nsp_sdp} need to
  hold for the support sets $S \in \{\varnothing, \{1\},\{2\},\{3\}\}$:
  \begin{alignat*}{4}
    S &= \varnothing\, : \quad (3\alpha,\alpha,\alpha, \alpha)\T &&\leq 0
    \implies  0 &&< 6\card{\alpha}, \\
    S &= \{1\}\, : \quad (\alpha,\alpha, \alpha)\T &&\leq 0
    \implies  3\alpha &&< 3\card{\alpha}, \\
    S &= \{2\}\, : \quad (3\alpha,\alpha, \alpha)\T &&\leq 0
    \implies  \alpha &&< 5\card{\alpha}, \\
    S &= \{3\}\, : \quad (3\alpha,\alpha)\T &&\leq 0
    \implies  2\alpha &&< 4\card{\alpha}.
  \end{alignat*}
  These are all satisfied, since for every
  $V \in \mathcal{N}(A) \bs \{0\}$, $\alpha \neq 0$ holds.
  
  However, the block-matrix null space property of order $s$ is violated,
  since for $S = \{1\}$, $\alpha \neq 0$
  it holds that
  $\sum_{i\in S}\norm{V_{B_i}}_{*} = 3\abs{\alpha} \geq 3\abs{\alpha} =
  \sum_{i \in \overline{S}}\norm{V_{B_i}}_{*}$, which contradicts~\eqref{eq:def_nsp}.
\end{example}

This demonstrates that explicitly exploiting nonnegativity or positive
semidefiniteness yields stronger results for uniform recovery, which was
already indicated in Remark~\ref{rem:NonnegNSPvsNSP}.
In the next subsection, we strengthen this point by explicitly 
constructing an infinite family of examples that satisfy the nonnegative 
block-linear null space property~\eqref{eq:nsp_block_linear_nonneg}. 
This shows that the proposed null space properties are meaningful in the
sense that they are satisfied by certain general (families of) matrices.

\subsection{An infinite family satisfying the nonnegative block-linear NSP}
\label{sec:ConstructionNonnegBlockNSP}

The NSPs for the nonnegative block-linear case and for
the semidefinite block-diagonal case hold in
many situations. Here,
we provide a specific (rather explicit) infinite family of instances
to show that even for block sizes 
$(n_1, \ldots, n_k) = (2,1, \ldots, 1)$,
the nonnegative block-linear NSP captures cases which are not captured
by the (unrestricted) block-linear NSP and which are not captured 
by the nonnegative linear NSP.

\begin{theorem}
\label{th:constr-family}
Let $k > m \geq 3$ and $B_1, \ldots, B_k$ be blocks of sizes 
$(n_1, \ldots, n_k) \define (2,1, \ldots, 1)$, and set 
$n \define \sum_{i=1}^k n_i = k+1$.
There exists an $m \times n$-matrix 
$A = [A[1]\cdots A[k]]$
such that the nonnegative block-linear NSP 
(see~Corollary~\ref{cor:nsp_nonneg_block})
up to the order
$s^* \define \floor{m/2 - 1}$ is satisfied.
Moreover, for $m \geq 12$
neither the unrestricted block-linear NSP of order $s^*$ is satisfied
nor the nonnegative linear NSP of order $s^*$ is satisfied.
\end{theorem}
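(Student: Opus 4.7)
The plan is to produce an explicit construction of $A$. The underlying principle is that the $2$-dimensional block $A[1]$ provides extra slack which separates the nonnegative block-linear NSP from the other two NSPs. Concretely, I would start from a matrix $A_0 \in \R^{m \times k}$ that satisfies the classical nonnegative linear NSP~\eqref{eq:linearnonnegNSP} of order $s^*$, which is known to be attainable (for instance via suitably chosen Vandermonde-type matrices or by standard random constructions) already for $m$ of moderate size compared to $k$ and $s^*$. I would then split the first column $a_1^{(0)}$ of $A_0$ into two columns $a_1, a_2 \in \R^m$ with $a_1 + a_2 = a_1^{(0)}$ but $a_1 \neq a_2$; these two columns form $A[1]$, while the single-column blocks $A[2], \dots, A[k]$ are inherited unchanged from the remaining columns of $A_0$.

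With this construction, $\mathcal{N}(A)$ decomposes as the sum of an \emph{inherited} part (arising from $\mathcal{N}(A_0)$ by duplicating the first coordinate) and one additional direction spanned by a vector $w$ that sits essentially in $A[1]$. I would then verify the nonnegative block-linear NSP by case distinction on whether block $1$ lies in $S$. If $1 \notin S$, then $v[\overline{S}] \leq 0$ in particular forces $v_1, v_2 \leq 0$; by the structure of $\mathcal{N}(A)$ this reduces to an inequality for a vector in $\mathcal{N}(A_0)$ which is controlled by the nonnegative linear NSP of $A_0$ applied at sparsity level $s^*$. If $1 \in S$, the condition $v[\overline{S}] \leq 0$ only concerns the scalar blocks, and the inequality to be verified becomes a direct statement about a submatrix of $A_0$, to which the nonnegative linear NSP again applies.

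For the two failure statements (which hold for $m \geq 12$), I would produce explicit witnesses. A witness for the failure of the unrestricted block-linear NSP~\eqref{eq:block_linearNSP} can be taken to be the \textit{extra} nullspace direction $w$: by construction, $w$ is supported on the first block together with a controlled small number of additional blocks, so that its $\ell_{1,1}$-mass is concentrated in at most $s^*$ blocks. A witness for the failure of the nonnegative linear NSP can be produced by combining $w$ with a carefully chosen element of $\mathcal{N}(A_0)$ of low positive-support, yielding a vector $v^{**} \in \mathcal{N}(A)$ with at most $s^*$ positive entries and the remaining entries non-positive while still satisfying $\sum_{i \in T} v^{**}_i \geq \norm{v^{**}_{\overline{T}}}_1$ for the corresponding set $T$ of positive indices.

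The main obstacle is the universal verification in the second step: the matrix $A_0$ and the split of its first column must be calibrated so that, simultaneously, the nonnegative linear NSP of $A_0$ is strong enough to transfer to the first case, and the extra direction $w$ remains a valid witness for the two failure statements. This is where the assumption $m \geq 12$ enters quantitatively: one needs $s^* = \floor{m/2 - 1}$ to be large enough that the witnesses fit into the bounds $\card{S} \leq s^*$ and $\card{T} \leq s^*$ but small enough that the classical NSP of $A_0$ is not violated. The detailed bookkeeping of block sizes against sparsity thresholds, together with a careful choice of $a_1$ and $a_2$ to enforce the desired sign patterns in $\mathcal{N}(A)$, is what makes the explicit computations nontrivial and what the final proof will have to handle with care.
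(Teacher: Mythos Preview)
Your approach is genuinely different from the paper's, but it has a real gap at the verification of the nonnegative block-linear NSP. Splitting a column \emph{enlarges} $\mathcal{N}(A)$ by one dimension relative to $\mathcal{N}(A_0)$, and your case analysis does not control what that extra direction does when $1\in S$. Concretely, take the simplest split $a_1=\alpha\, a_1^{(0)}$, $a_2=(1-\alpha)\,a_1^{(0)}$ with $\alpha\in(0,1)$, $\alpha\neq\tfrac12$. Then $w=(1-\alpha,-\alpha,0,\ldots,0)^\top\in\mathcal{N}(A)$ is supported on block~$1$ only, which indeed furnishes your witness against the unrestricted block-linear NSP. But the same $w$ (or $-w$) destroys the nonnegative block-linear NSP: with $S=\{1\}$ the hypothesis $v[\overline S]\le 0$ is vacuous, while $\ones^\top w[1]=1-2\alpha\neq 0$, so one of $\pm w$ violates the required strict inequality $\ones^\top v[1]<\norm{v[\overline S]}_{1,1}=0$. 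For non-parallel splits the picture is no cleaner: you still have to bound $v_1+v_2$ in terms of a single combined coordinate of a vector in $\mathcal{N}(A_0)$, and nothing in your sketch supplies that bound. The calibration you flag as ``the main obstacle'' is exactly where the argument currently breaks.

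The paper goes the opposite way. Instead of keeping $m$ rows and splitting a column, it starts from $k-1$ points $w^{(1)},\ldots,w^{(k-1)}$ on the moment curve in $\R^{m-2}$ (so that their convex hull is a cyclic, hence $\floor{(m-2)/2}$-neighborly, polytope), picks an interior point $p$, and \emph{appends two new coordinates}: the two block-$1$ columns are $(p,1,0)^\top$ and $(p,0,1)^\top$, the remaining columns are $(w^{(i)},0,0)^\top$. The last two rows each have a single nonzero entry, forcing $v_1=v_2=0$ for every $v\in\mathcal{N}(A)$. Block~$1$ is thus inert in the null space, and the nonnegative block-linear NSP reduces cleanly to the nonnegative linear NSP for a pyramid over the cyclic polytope, which is again $\floor{m/2-1}$-neighborly. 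Failure of the nonnegative linear NSP then comes from Donoho--Tanner's polytope criterion (the two block-$1$ columns share the projection $p$, hence do not span a face), and failure of the unrestricted block-linear NSP is obtained not through an explicit witness but from the McMullen--Shephard bound: centrally symmetric polytopes in $\R^m$ cannot be $s$-neighborly for $s>\floor{(m+1)/3}$, and $m\ge 12$ is precisely what makes $s^*=\floor{m/2-1}$ exceed this threshold.
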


In the proof of the theorem, we will the apply the following
characterization of the nonnegative linear NSP.

\begin{proposition}[Donoho \& Tanner~\cite{donoho-tanner-2005}]
\label{pr:outwardly-neighborly}
Let $A \in \R^{m \times n}$ be a matrix with nonzero columns $a^{(1)}, \ldots, a^{(n)}$ 
and $m < n$, and let $s \ge 1$.
Then $A$ satisfies the 
nonnegative linear NSP of order $s$ if and only if 
the polytope 
$P \define \conv \{a^{(1)}, \ldots, a^{(n)}, 0 \}$
has $n+1$ vertices and is outwardly $s$-neighborly, that is,
every subset of $s$ vertices not including the origin span a face of $P$.
\end{proposition}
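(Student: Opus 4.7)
The plan is to give an explicit construction of $A$ as a moment-curve / Vandermonde-type matrix. I would fix distinct real parameters $t_1, \ldots, t_n$ and take $a^{(j)} \define (1, t_j, t_j^2, \ldots, t_j^{m-1})\T$ as the $j$-th column of $A$, with $t_3, \ldots, t_n$ placed on a positive ray (inheriting the rich cyclic-polytope face structure for the bulk of the columns) and the two parameters $t_1, t_2$ of block~$1$ placed asymmetrically with respect to the origin. The Vandermonde structure will drive the positive assertion through sign-alternation properties, while the asymmetric placement of block~$1$ will produce short explicit certificates witnessing the two failure statements.

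For the failure of the unrestricted block-linear NSP at order~$s^*$, I would exhibit a null-space vector~$v$ supported essentially on block~$1$ together with a few other blocks, violating~\eqref{eq:def_nsp} with a choice of $S$ containing block~$1$. Such a $v$ arises from a minimal linear dependence among $a^{(1)}$, $a^{(2)}$ and a small set of other columns (which exists since $k > m$), and the asymmetric positioning of $t_1, t_2$ forces enough of its $\ell_1$-mass onto block~$1$ to outweigh the complement. For the failure of the nonnegative linear NSP at order~$s^*$ I would invoke Proposition~\ref{pr:outwardly-neighborly}: because $a^{(1)}$ and $a^{(2)}$ straddle the origin along the moment curve, one can identify a subset of $s^*$ columns whose conic hull contains the origin in its relative interior, so that this subset does not span a face of $P = \conv\{a^{(1)}, \ldots, a^{(n)}, 0\}$; hence $P$ fails to be outwardly $s^*$-neighborly. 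The numeric hypothesis $m \geq 12$ enters precisely to leave enough dimensional room to exhibit such a subset alongside the previous certificate.

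The core of the proof is verifying that the nonnegative block-linear NSP~\eqref{eq:nsp_block_linear_nonneg} holds at order~$s^*$. Given $v \in \mathcal{N}(A) \setminus \{0\}$ and $S \subseteq [k]$ with $\card{S} \leq s^*$ and $v[\overline{S}] \leq 0$, the relation $Av = 0$ becomes the Chebyshev-system identity $\sum_{j=1}^n v_j\, t_j^r = 0$ for $r = 0, \ldots, m-1$. Descartes-type sign-alternation arguments then force the sequence $(v_j)_j$, ordered by the $t_j$'s, to have many sign changes; combined with $v[\overline{S}] \leq 0$ this puts at least of the order of $m - s^*$ strictly negative entries outside $S$, while the positive contributions within $S$ involve at most $s^* + 1$ coordinates (only block~$1$ contributes two). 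A direct magnitude comparison then produces the required strict inequality $\sum_{i \in S} \ones\T v[i] < \norm{v[\overline{S}]}_{1,1}$.

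The main obstacle is the uniform treatment of the two columns of block~$1$: they are the source of the failure certificates for (ii) and (iii), yet must not destroy the sign-alternation bound underlying the positive assertion. I expect the closing step to require a case split on whether block~$1$ lies in $S$ or in $\overline{S}$, together with a continuity/perturbation argument on $t_1, t_2$, in order to simultaneously enforce all three conclusions for every $m \geq 12$.
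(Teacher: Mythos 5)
Your proposal does not prove the stated proposition. Proposition~\ref{pr:outwardly-neighborly} is a general equivalence, valid for \emph{every} matrix $A \in \R^{m\times n}$ with nonzero columns and $m<n$: $A$ satisfies the nonnegative linear NSP of order $s$ \emph{if and only if} the polytope $P=\conv\{a^{(1)},\dots,a^{(n)},0\}$ has $n+1$ vertices and is outwardly $s$-neighborly. A proof must therefore relate, in both directions, the null space condition (namely $v_{\overline S}\le 0 \Rightarrow \sum_{i\in S}v_i < \norm{v_{\overline S}}_1$ for all $v\in\mathcal{N}(A)\setminus\{0\}$ and all $\card{S}\le s$) to the face structure of $P$ for an \emph{arbitrary} $A$ --- for instance by translating ``the vertices indexed by $S$ span a face of $P$'' into the nonexistence of certain sign-constrained vectors in $\mathcal{N}(A)$, and conversely. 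What you have written instead is the construction of one specific family of Vandermonde/moment-curve matrices together with verifications and refutations of various \emph{block} NSPs for that family; that is the content of Theorem~\ref{th:constr-family}, not of this proposition. No construction of particular matrices can establish an ``if and only if'' statement quantified over all $A$.

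The mismatch is not merely cosmetic: your sketch explicitly \emph{invokes} Proposition~\ref{pr:outwardly-neighborly} as a tool (``I would invoke Proposition~\ref{pr:outwardly-neighborly} \dots''), so offered as a proof of that proposition it would be circular. Note also that the paper gives no proof of this statement; it is quoted from Donoho and Tanner~\cite{donoho-tanner-2005} and used as a black box in the proof of Theorem~\ref{th:constr-family}. Even read as an attempt at Theorem~\ref{th:constr-family}, your outline diverges from the paper's argument: the paper takes a cyclic polytope in $\R^{m-2}$, lifts it by two extra coordinates (which force $v_1=v_2=0$ for every null space vector, reducing the block NSP to the ordinary nonnegative NSP of a pyramid over a neighborly polytope), and refutes the unrestricted NSP via the McMullen--Shephard bound on centrally neighborly polytopes, rather than via Descartes-type sign-alternation counts. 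But the primary issue stands: the proposal does not address the claim it was asked to prove.
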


\begin{remark}\label{re:unrestricted}With 
the same preconditions, $A$ satisfies the unrestricted
linear NSP of order $s$ if and only if the polytope
$P' \define\conv \{\pm a^{(1)}, \ldots, \pm a^{(n)}\}$ has $2n$
vertices and is $s$-centrally neighborly, i.e., any $s$ vertices
not including an antipodal pair span a face of~$P$,
see \cite[Theorem~1]{donoho-2005-neighborly} and also~\cite[Exer.~4.16]{foucart-rauhut-2013}.
By results of
McMullen and Shephard~\cite{mcmullen-shephard-1968}, $P'$ can
never be $s$-centrally neighborly for 
$s > \lfloor (m+1)/3 \rfloor$
(see also~\cite[Section 5.3]{donoho-tanner-techrep-2005}).
\end{remark}

\begin{proof}[Proof of Theorem~\ref{th:constr-family}]
  Let $w^{(1)}, \ldots, w^{(k-1)} \in \R^{m-2} \setminus \{0\}$ be $k-1$ distinct
  points on the moment curve
  $\{(t,t^2, \ldots, t^{m-2})\T \suchthat t \in \R\}$ in $\R^{m-2}$.  It is
  well-known that the polytope $P = \conv \{ w^{(1)}, \ldots, w^{(k-1)}\}$
  is a cyclic polytope, which is $\lfloor (m-2)/2 \rfloor$-neighborly, see,
  e.g.,~\cite[Corollary 0.8]{ziegler-book-1995}. Hence, the nonnegative
  linear NSP of order $\floor{(m-2)/2} = \floor{m/2 - 1}$
  holds for the matrix
  $A' \define [w^{(1)}, \ldots, w^{(k-1)}] \in \R^{(m-2) \times (k-1)}$.

Let $p$ be an interior point of $P$ and 
set
$w' = (p,1,0)\T$,
$w'' = (p,0,1)\T$,
$\hat{w}^{(i)} = (w^{(i)}, 0,0)\T$ for $i \in [k-1]$.
Let $A \define [w',w'',\hat{w}^{(1)}, \dots, \hat{w}^{(k-1)}] \in \R^{m \times n}$
and consider the 
block sizes $(2,1, \ldots, 1)$.
We claim that $A$ satisfies the nonnegative block-linear NSP of order $s^*$. 
Namely, assume that there exists a nonzero vector 
$v = (v_1, \ldots, v_{n})\T \in \mathcal{N}(A) \setminus \{0\}$ 
and $S \subseteq [k]$ with $\card{S} \le s^*$ and $v_{\overline{S}} \le 0$ such that
$\sum_{i\in S} \ones\T v[i] \ge \norm{v_{\overline{S}}}_{1,1}$. Since $v \in \mathcal{N}(A)$
and since the penultimate and the last row of $A$ only have a single
nonzero entry, we have $v_1 = v_2 = 0$. Hence,
$\tilde{v} \define (v_1, v_3, \ldots, v_n)\T$ is a nonzero vector in the
null space 
of $A^{\Diamond} = [w', \hat{w}^{(1)}, \ldots, \hat{w}^{(k-1)}]$ and violates
the nonnegative linear NSP of order $s^*$ for $A^{\Diamond}$.
However, since the 
polytope $P$ and thus also the polytope 
 $\conv \{w', \hat{w}^{(1)}, \ldots, \hat{w}^{(k-1)}\}$ are 
$\floor{m/2 - 1}$-neighborly (due to the pyramidal 
construction with respect to the apex $w'$), this is a contradiction.
  
The nonnegative linear NSP of order $s^*$ does not hold for $A$ if $m \geq 12$, because the
polytope 
$P' \define \conv \{ w',w''$, $\hat{w}^{(1)}, \ldots, \hat{w}^{(k-1)}\}$ is not
$s^*$-neighborly. To see this, observe that any choice 
of vertices which includes $w'$ and $w''$ cannot span a face,
hence $P'$ is not 2-neighborly, and this implies that
$P'$ is not $\floor{m/2 -1}$-neighborly because of $m \ge 6$.

It remains to show that 
the unrestricted block-linear $\text{NSP}_{q,1}$
of order $s^*$ is not satisfied for $m \geq 12$.
Assume that it is satisfied. Then for any 
$v = (v_1, \ldots, v_n)\T \in \mathcal{N}(A) \setminus \{0\}$ and $S \subseteq [k]$
with~$\card{S} \le s^*$, we have
$\norm{v[S]}_{q,1} < \norm{v[\overline{S}]}_{q,1}$
Restricting to $v_1 = 0$, the induced NSP-formula of order $s^*$ must also 
hold for any corresponding $(v_2, \ldots, v_n)\T \in \mathcal{N}(\tilde{A})$, where 
$\tilde{A}$ results from $A$ by deleting the first column,
i.e., $\tilde{A} = [w'', w^{(1)} \ldots, w^{(k-1)}]$.
But this is a contradiction to
the results of McMullen and Shephard from 
Remark~\ref{re:unrestricted},
because we have $m \ge 12$ and thus
$s^* = \floor{m/2 -1} > \floor{(m+1)/3}$.
\end{proof}

\begin{remark}
The construction in the proof can be generalized, for example to
block sizes 
$
  (n_1, \ldots, n_k) = (\underbrace{2, \ldots, 2}_r, \underbrace{1, \ldots, 1}_{n-r})
$
for fixed $r$ and sufficiently large $k$.
\end{remark}

\section{Conclusion and open questions}
\label{sec:Conclusion}

We have presented and discussed a comprehensive framework for 
recovery problems, which, in particular, allows to capture nonnegativity
and positive semidefiniteness constraints. Building upon this framework,
we have established generalized null space conditions, which has also allowed
us to derive a systematic classification of broad classes of NSPs from
a single generalization.

We close the paper by mentioning some open questions.
An important aspect not covered in the present paper concerns the block-structure 
for vectors. Instead of exploiting the knowledge about the block-structure, 
it is possible to directly apply the methods and optimization problems for recovery 
of non-block-structured vectors, by disregarding any information about
blocks. It is clear that in the setting of block-sparse vectors consisting
of~$k$ blocks, every block-$s$-sparse vector~$x$ is also~$\tilde{s}$-sparse
in the classical sense, where~$\tilde{s}$ is the sum of the~$s$ largest
block sizes of the~$k$ blocks, since~$x$ has at most~$\tilde{s}$ nonzero
elements. However, not every sparse vector is also block-sparse with
respect to some block-structure. Thus, the conditions for uniform recovery
of non-block-sparse vectors may be too strong for uniform recovery of all
block-sparse vectors. For a short discussion in terms of the restricted
isometry constant and property, and an illustrative example,
see~\cite{EldarMishali2009}.

In the block-structured settings, an inner $\ell_2$-norm or an inner
Frobenius norm is typically used in the recovery problems due to their
robustness. Without using nonnegativity or positive semidefiniteness, the
respective null space properties can be applied, since these hold for
arbitrary inner norms, see Remark~\ref{rem:GeneralizationBlockMatrix} and
Corollary~\ref{cor:nsp_blocklinear_q}. If nonnegativity or positive
semidefiniteness is exploited in the recovery problem, things seem to be
different. In these cases, the null space properties only hold if the inner
norm is given by the $\ell_1$-norm in the case of vectors or the nuclear
norm in the case of matrices, see Theorem~\ref{thm:nsp_sdp} and
Corollary~\ref{cor:nsp_nonneg_block}. Thus, an interesting line of future
research would be to analyze what happens if another (inner) norm is used
in these block-structured settings.

For non-block settings, however, using different norms often has
side-effects. For example, in the classical case of
sparse recovery, it is well-known that recovery using the $\ell_q$-norm and
the optimization problem
\begin{align}\label{eq:lqMin}
  \min\, \{\norm{x}_q \suchthat Ax = b\}
\end{align}
with $q > 1$ already fails for $1$-sparse vectors, in general, whereas for $0 < q < 1$
the $\ell_q$-norm leads to favorable recovery properties~\cite{MouR10}, but
results in an $\NP$-hard optimization problem~\cite{GeJY11}. Note that
the null space property~\eqref{eq:linearNSP} also
characterizes uniform recovery using~\eqref{eq:lqMin} when replacing the
$\ell_1$-norm by the $\ell_q$-norm with $0 < q < 1$,
see~\cite{foucart-rauhut-2013}.

\section*{Acknowledgement}
We thank the anonymous reviewers for valuable comments and suggestions that
helped to improve the style and presentation of the paper.

\bibliography{block-semidefinite}
\bibliographystyle{abbrv}

\renewcommand{\thefootnote}{\fnsymbol{footnote}}
\setcounter{footnote}{0}

\end{document}